\documentclass[a4,oneside,english,reqno,12pt]{amsart}
\usepackage{extsizes}
\usepackage{geometry}
\usepackage{amssymb,amsmath,amsthm,amsfonts,mathtools,esint,dsfont,bbm,yhmath}
\usepackage{appendix}
\usepackage{enumitem}
\usepackage{datetime}

\usepackage[pdfencoding=auto,psdextra]{hyperref}
\usepackage{bookmark}
\usepackage{cleveref}
\usepackage{crossreftools}
\crefname{subsection}{subsection}{subsections}
\usepackage{orcidlink}
\makeatletter

\usepackage{hyperref} 
\hypersetup{
  pdftitle=Homogeneous attractive Bose--Einstein condensates  with repulsive three-body interactions: the one-dimensional case,
  pdfauthor=Dinh-Thi Nguyen,
  pdfsubject=1D Homogeneous BECs,
}

\theoremstyle{plain}
\newtheorem{theorem}{Theorem}[section]
\newtheorem{lemma}[theorem]{Lemma}
\numberwithin{equation}{section}

\theoremstyle{remark}
\newtheorem{remark}[theorem]{Remark}
\newtheorem*{remark*}{Remark}

\allowdisplaybreaks

\title[1D homogeneous BECs]{Homogeneous attractive Bose--Einstein condensates \\ with repulsive three-body interactions:\\ the one-dimensional case}

\author[D.-T. Nguyen]{Dinh-Thi Nguyen\,\orcidlink{0000-0003-4487-5557}}
\address{Faculty of Mathematics and Computer Science, University of Science, Ho Chi Minh City, Vietnam; and Vietnam National University, Ho Chi Minh City, Vietnam.}
\email{\href{mailto:ndthi@hcmus.edu.vn}{ndthi@hcmus.edu.vn}
}

\subjclass[2020]{35J10, 35Q55, 81V70, 82D05} 
\keywords{Bose--Einstein condensates, Gagliardo--Nirenberg inequality, nonlinear Schr{\"o}\-dinger equation, three-body interaction, two-body interaction}

\begin{document}
\begin{abstract}
We investigate the behavior of a homogeneous one-dimensional Bose gas composed of $N$ identical bosons experiencing attractive two-body and repulsive three-body interactions. Utilizing intermediate Hartree theory, we rigorously derive the homogeneous cubic-quintic nonlinear Schrödinger functional as the mean-field limit of the many-body Hamiltonian. Particular attention is given to the impact of the repulsive three-body term on ground state energy estimates and mass concentration properties
\end{abstract}

\maketitle

\section{Introduction}

In this paper, we investigate the ground-state properties of a one-dimensional (1D) homogeneous Bose gas characterized by competing interactions: an attractive two-body force and a repulsive three-body force. Our primary objective is to establish a mathematically rigorous link between the quantum many-body Hamiltonian and its corresponding effective mean-field models, while analyzing the parameter regimes in which Bose--Einstein condensation (BEC) and energy convergence occur.

We here consider a system of $N\geq 3$ identical bosons in $\mathbb{R}$, described by the nonrelativistic Hamiltonian
\begin{align}\label{hamiltonian}
H_{a,b,N} := & \sum_{i=1}^{N} -\Delta_{x_i} - \frac{a}{N-1}\sum_{1\leq i<j\leq N} U^{\rm 2B}_{N^\alpha}(x_i-x_j) \nonumber \\
& + \frac{b}{(N-1)(N-2)} \sum_{1\leq i<j<k\leq N} U^{\rm 3B}_{N^\beta}(x_i-x_j) U^{\rm 3B}_{N^\beta}(x_i-x_k) ,
\end{align}
acting on the symmetric space $L^{2}_{\rm sym}(\mathbb{R}^{N})$. The parameters $a\geq0$ and $b\geq0$ are the strengths of the two-body and three-body interactions, respectively. The interaction terms $U^{\rm 2B}_{N^\alpha}$ and $U^{\rm 3B}_{N^\beta}$ are scaled through the parameters $\alpha,\beta>0$, i.e.,
\begin{equation}\label{scaled-interaction}
U^{\rm 2B}_{N^\alpha}(x) := N^{\alpha}U^{\rm 2B}(N^\alpha x) \quad \text{and} \quad U^{\rm 3B}_{N^\beta}(x) : = N^{\beta}U^{\rm 3B}(N^\beta x) ,
\end{equation}
in such a way that we may expect a well-defined semiclassical theory in the limit $N\to+\infty$. Here the potential functions $U^{\rm 2B}$ and $U^{\rm 3B}$ satisfy the following common conditions
\begin{equation}\label{condition:2-3-body}
0 \leq U^{\rm kB}(x)=U^{\rm kB}(-x) \quad \text{ with } \quad U^{\rm kB},\widehat{U^{\rm kB}} \in L^1(\mathbb{R}) \quad \text{ and } \quad \int_{\mathbb{R}}U^{\rm kB}(x) {\rm d}x = 1,
\end{equation}
where $k \in \{2,3\}$. In the literature \cite{NamRicTri-22a,NamRicTri-22b,NamRicTri-23,NguRic-24,NguRic-25}, the three-body interaction was represented by a function of the distances between three particles. However, due to technical issues arising in the mathematical studies of homogeneous Bose gases, we consider here the three-body interaction as the product of the same two-body interaction which preserves the properties of three-body interactions.

Our focus lies on the mean-field limit of the quantum ground state energy per particle, given by
\begin{equation}\label{energy:quantum}
E_{a,b,N}^{\mathrm{QM}} := \inf\left\{\left\langle\Psi_N \left|\frac{H_{a,b,N}}{N}\right| \Psi_N \right\rangle : \Psi_N \in L^{2}_{\rm sym}(\mathbb{R}^{N}), \int_{\mathbb R^{N}}|\Psi_N|^{2}=1 \right\}
\end{equation}
and on the associated ground states. In the many-body setting, the mean-field ansatz
\begin{equation}\label{eq:BEC}
\Psi_N(x_1,\ldots,x_N) \approx v^{\otimes N}(x_1,\ldots,x_N) := v(x_1)\ldots v(x_N) \quad \text{with} \quad \|v\|_{L^{2}}^{2}=1
\end{equation}
leads to the homogeneous Hartree functional
\begin{align}\label{functional:h}
\mathcal{E}^{\rm H}_{a, b,N}[v] :={} & \int_{\mathbb{R}}|v'(x)|^{2} {\rm d}x - \frac{a}{2}\iint_{\mathbb{R}^{2}} U^{\rm 2B}_{N^\alpha}(x-y)|v(x)|^{2}|v(y)|^{2} {\rm d}x {\rm d}y \nonumber \\
& + \frac{b}{6}\iiint_{\mathbb{R}^{3}} U^{\rm 3B}_{N^\beta}(x-y)U^{\rm 3B}_{N^\beta}(x-z)|v(x)|^{2}|v(y)|^{2}|v(z)|^{2} {\rm d}x {\rm d}y {\rm d}z .
\end{align}
The associated Hartree energy is given by, with $\lambda>0$,
\begin{equation}\label{energy:h}
E_{a,b,N}^{\mathrm{H}}[\lambda] := \inf\left\{\mathcal{E}^{\rm H}_{a, b,N}[v]:v\in H^1(\mathbb{R}), \int_{\mathbb R}|v|^{2}=\lambda\right\}.
\end{equation}
Note that $E_{a,b,N}^{\mathrm{H}}[1] \geq E_{a,b,N}^{\rm QM}$ holds by testing the variational problem on factorized states \eqref{eq:BEC}. The Hartree functional \eqref{functional:h}, which is commonly used to describe the mean-field approximation in many-body systems, can be formally related to the NLS functional under certain conditions. In the large-$N$ limit, the rescaled potentials converge to delta distributions, and the Hartree functional reduces to the cubic-quintic nonlinear Schr\"odinger (NLS) functional
\begin{equation}\label{functional:nls}
\mathcal{E}^{\rm NLS}_{a,b}[v] := \int_{\mathbb{R}}\left[|v'(x)|^{2} - \frac{a}{2} |v(x)|^{4} + \frac{b}{6} |v(x)|^6\right] {\rm d}x ,
\end{equation}
with the associated NLS ground state energy
\begin{equation}\label{energy:nls}
E^{\rm NLS}_{a,b}[\lambda] := \inf\left\{\mathcal{E}^{\rm NLS}_{a,b}[v]:v\in H^1(\mathbb{R}), \int_{\mathbb{R}}|v(x)|^{2} = \lambda\right\},
\end{equation}
where $\lambda>0$.

The concern of this work is the relation between the three models: the many-body Hamiltonian \eqref{hamiltonian}, the Hartree functional \eqref{functional:h}, and the cubic-quintic NLS functional \eqref{functional:nls}. We establish asymptotic of the ground state energy in the large-$N$ limit and characterize the qualitative behavior of ground states across different parameter regimes. Our results extend previous works on the theory for 1D inhomogeneous Bose gases \cite{Lewin-ICMP,LewNamRou-16,CheHol-17,LewNamRou-17-proc} (see also \cite{BraSacTolHul-95,BraSacHul-97,SacStoHul-98}), as well as studies of Bose systems with three-body interactions \cite{ChePav-11,Chen-12,Yuan-15,Xie-15,CheHol-19,NamSal-20,LiYao-21,NamRicTri-22a,NamRicTri-22b,NamRicTri-23,NguRic-24,NguRic-25} (see also \cite{EsrGreZhoLin-96,JosRic-97,GamFreTom-99,AkhDasVag-99,GamFreTomCho-00}). In particular, the study of 1D homogeneous BECs leads to rich phenomena such as self-trapping \cite{Petrov-14}, supersolid phase \cite{BisBla-15,Blakie-16}, and phase transitions between cubic and cubic-quintic regimes.

In 1D, the cubic nonlinearity is mass-subcritical relative to the kinetic term, making it stable even without the three-body term, while the quintic term provides further stabilization. In the limit of strong repulsive three-body interaction, i.e., $b \to +\infty$, the kinetic term in \eqref{functional:nls} is neglected as three-body repulsion dominates, and the minimization problem \eqref{energy:nls} reduces to a local density approximation. This is the so-called ``Thomas--Fermi limit'' (TF) which appeared in the studies of confined BECs with repulsive two-body interactions \cite{LieSeiYng-00}. Indeed, it was proved in \cite{DoaNgu-26} that, in the Thomas--Fermi limit, we have
\begin{equation}\label{energy:TF}
E^{\rm NLS}_{a,b}[1] \approx E^{\rm TF}_{a,b}[1] = \inf\left\{\mathcal{E}^{\rm TF}_{a,b}[\varrho] : 0\leq \varrho \in L^{1} \cap L^{3}(\mathbb R), \int_{\mathbb R}\varrho(x){\rm d}x = 1\right\} = \frac{a^{2}}{b}E^{\rm TF}_{1,1}[1] < 0
\end{equation}
where
\begin{equation}\label{functional:tf}
\mathcal{E}^{\rm TF}_{a,b}[\varrho] = \int_{\mathbb R} \left[-\frac{a}{2}\varrho(x)^{2} + \frac{b}{6}\varrho(x)^{3}\right]{\rm d}x.
\end{equation}
Furthermore, the minimization problem $E^{\rm TF}_{a,b}[1]$ in \eqref{energy:TF} admits (up to translation) a unique ground state $\varrho^{\rm TF}_{a,b}$, which is positive radially symmetric decreasing and satisfies, by a simple scaling,
\begin{equation}\label{energy:TF-scaling}
\varrho^{\rm TF}_{a,b}(x) = \frac{a}{b}\varrho^{\rm TF}_{1,1}\left(\frac{a}{b}x\right).
\end{equation}
In the following, we summarize the results of \cite{DoaNgu-26} on the existence of homogeneous cubic-quintic NLS ground states as well as its asymptotic behaviors.

\begin{theorem}\label{thm:existence-nls}
We have the following.
\begin{enumerate}[label=(\roman*)]
\item\label{nls-gs-existence} {\bf Existence of NLS ground states.} If $a>0$ and $b \geq 0$ then $E^{\rm NLS}_{a,b}[1] < 0$ and there are ground states, which are nonnegative radially symmetric decreasing.

\item {\bf Existence and uniqueness of TF ground state.} The minimization problem $E^{\rm TF}_{a,b}[1]$ given by \eqref{energy:TF} admits a (unique) ground state for every fixed $a,b>0$. Furthermore, $E^{\rm TF}_{a,b}[1] = -\dfrac{3a^{2}}{8b}$.
\end{enumerate}
\end{theorem}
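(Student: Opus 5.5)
For part (i) I will run the direct method of the calculus of variations with symmetric-decreasing rearrangement. \emph{Negativity} comes from a mass-preserving dilation. Fix $v\in H^1(\mathbb R)$ with $\|v\|_{L^2}=1$, $v\neq0$, and set $v_\ell(x)=\ell^{1/2}v(\ell x)$. Then
\begin{equation*}
\mathcal E^{\rm NLS}_{a,b}[v_\ell]=\ell^2\|v'\|_{L^2}^2-\frac a2\,\ell\|v\|_{L^4}^4+\frac b6\,\ell^2\|v\|_{L^6}^6 .
\end{equation*}
For small $\ell>0$ the linear term dominates, so $E^{\rm NLS}_{a,b}[1]<0$.

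\emph{Boundedness below.} I will use the 1D Gagliardo--Nirenberg inequality $\|v\|_{L^4}^4\le C\|v'\|_{L^2}\|v\|_{L^2}^3$. This gives $\mathcal E^{\rm NLS}_{a,b}[v]\ge \|v'\|^2-\frac{aC}{2}\|v'\|\ge -a^2C^2/16$. The same bound shows that minimizing sequences are bounded in $H^1$.

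\emph{Symmetrization.} Replacing $v$ by $|v|^*$ does not increase the kinetic term (P\'olya--Szeg\H{o}) and leaves every $L^p$ norm unchanged. So I may take a minimizing sequence of nonnegative, symmetric, decreasing functions $v_n$.

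\emph{Compactness.} Bounded radial decreasing functions satisfy the uniform bound $|v_n(x)|^2\le \|v_n\|_{L^2}^2/(2|x|)$. By Strauss-type compactness, $v_n\to v$ strongly in $L^4$ and $L^6$, and weakly in $H^1$. Weak lower semicontinuity then gives
\begin{equation*}
\mathcal E^{\rm NLS}_{a,b}[v]\le E^{\rm NLS}_{a,b}[1],
\end{equation*}
with $\|v\|_{L^2}^2=:\mu\le1$. Also $v\neq0$, because the limit energy is negative.

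The main obstacle is excluding $\mu<1$, i.e. mass loss at infinity. For this I will use the strict subadditivity coming from the scaling $v\mapsto\theta^{1/2}v$ with $\theta>1$:
\begin{equation*}
\mathcal E^{\rm NLS}_{a,b}[\theta^{1/2}v]=\theta\|v'\|^2-\frac a2\theta^2\|v\|_4^4+\frac b6\theta^3\|v\|_6^6 .
\end{equation*}
The quintic term has the wrong sign for the naive argument, so I will instead combine the dilation with the amplitude change. I set $w=\theta^{1/2}v(\theta\,\cdot)$, which has $\|w\|_2^2=\mu$. Next I take $u=\theta^{1/2}w$ with $\theta=1/\mu$, so that $u(x)=\theta v(\theta x)$ and $\|u\|_2^2=1$. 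Then
\begin{equation*}
\mathcal E[u]=\theta^3\|v'\|^2-\frac a2\theta^3\|v\|_4^4+\frac b6\theta^5\|v\|_6^6 .
\end{equation*}
This still fails to be obviously below $\theta\,\mathcal E[v]$, so I plan the cleaner route. First I show $\lambda\mapsto E[\lambda]/\lambda$ is strictly decreasing. By the computation above, $E[\lambda]<0$ for every $\lambda>0$, and minimizers exist for each $\lambda$ once non-vanishing is known. Then I use the standard Lions strict inequality $E[1]<E[\mu]+E[1-\mu]$, obtained by writing $E[\lambda]=\inf\{\cdots\}$ over the scaled family $\lambda^{1/2}v(\cdot)$ and checking that $t\mapsto E[t]/t$ is nonincreasing and strictly decreasing where the quartic term is active. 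Monotonicity of $E[\lambda]$ itself is easy: adding a spread-out bump adds arbitrarily small negative energy. Combined with the limit energy bound, this forces $\mu=1$. Strong $L^2$ convergence then follows, and $v$ is a nonnegative radially symmetric decreasing minimizer.

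For part (ii) the functional is local, so I will minimize pointwise using a Lagrange multiplier. The identity $E^{\rm TF}_{a,b}[1]=\frac{a^2}{b}E^{\rm TF}_{1,1}[1]$ reduces to $a=b=1$ by setting $\varrho(x)=\frac ab\tilde\varrho(\frac ab x)$, as in \eqref{energy:TF-scaling}. I then consider $f(\varrho)=-\frac12\varrho^2+\frac16\varrho^3$ and want to minimize $\int f(\varrho)$ subject to $\int\varrho=1$. Since $f(\varrho)/\varrho=-\frac12\varrho+\frac16\varrho^2$ is minimized at $\varrho=3/2$ with value $-3/8$, we get $\int f(\varrho)\ge -\frac38\int\varrho=-\frac38$. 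Equality holds iff $\varrho\in\{0,3/2\}$ a.e., i.e. $\varrho=\frac32\mathbbm 1_{[c,c+2/3]}$. This gives existence, uniqueness up to translation (the symmetric decreasing choice is centered), and $E^{\rm TF}_{a,b}[1]=-3a^2/(8b)$ after scaling.
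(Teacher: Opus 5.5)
Note first that the paper does not prove this theorem; it quotes it from \cite{DoaNgu-26}. The closest argument in the text is the exclusion of dichotomy in the proof of \Cref{thm:existence-h}.

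\textbf{Part (i).} Your architecture is sound: dilation for negativity, $\|v\|_{L^4}^4\le\|v'\|_{L^2}\|v\|_{L^2}^3$ for the $H^1$ bound, P\'olya--Szeg\H{o}, and Strauss-type compactness for symmetric decreasing sequences. However, the step that carries the proof, ruling out $\mu<1$, is not established. The only mechanism you name is the amplitude scaling $\lambda^{1/2}v$. It cannot give $E[\theta\lambda]<\theta E[\lambda]$, because the quintic term scales like $\theta^{3}$, as you noticed yourself. The phrase ``strictly decreasing where the quartic term is active'' is not an argument. The scaling that works is the pure dilation $v\mapsto v(\cdot/\theta)$. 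It multiplies the mass and both nonlinear terms by $\theta$ and the kinetic term by $\theta^{-1}$, so
\begin{equation*}
\mathcal{E}^{\rm NLS}_{a,b}[v(\cdot/\theta)]=\theta\,\mathcal{E}^{\rm NLS}_{a,b}[v]-(\theta-\theta^{-1})\|v'\|_{L^2}^2 .
\end{equation*}
Suppose $\|v\|_{L^2}^2=\mu$ and $\mathcal{E}^{\rm NLS}_{a,b}[v]\le\frac12E^{\rm NLS}_{a,b}[\mu]<0$. Then $\|v'\|_{L^2}\ge|E^{\rm NLS}_{a,b}[\mu]|/(a\mu^{3/2})$. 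Taking $\theta=1/\mu$ gives $E^{\rm NLS}_{a,b}[\mu]>\mu E^{\rm NLS}_{a,b}[1]$, without needing minimizers at mass $\mu$. This is the computation \eqref{energy:h-split-lambda}, and it yields strict binding; in particular your claim that $E[t]/t$ is strictly decreasing is true, just unproved as written.

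In your symmetric decreasing setting you do not even need this. The escaping part $v_n-v$ has vanishing $L^4$ and $L^6$ norms, so it suffices to know $E^{\rm NLS}_{a,b}[1]<E^{\rm NLS}_{a,b}[\mu]$. That follows from two facts:
\begin{itemize}
\item[] $E^{\rm NLS}_{a,b}[1]\le E^{\rm NLS}_{a,b}[\mu]+E^{\rm NLS}_{a,b}[1-\mu]$, by gluing disjointly supported $C_c^\infty$ near-minimizers;
\item[] $E^{\rm NLS}_{a,b}[1-\mu]<0$, by your dilation argument at mass $1-\mu$.
\end{itemize}
Your ``spread-out bump'' remark already gives this strict inequality; you simply did not use it that way.

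\textbf{Part (ii).} The pointwise bound is correct, since $-\frac12\varrho^2+\frac16\varrho^3+\frac38\varrho=\frac{\varrho}{6}(\varrho-\frac32)^2\ge0$. Together with the scaling it gives existence and $E^{\rm TF}_{a,b}[1]=-\frac{3a^2}{8b}$.

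The equality case, however, is misidentified. The conditions $\varrho\in\{0,\frac32\}$ a.e.\ and $\int\varrho=1$ mean $\varrho=\frac32\mathbbm{1}_A$ for an arbitrary measurable $A$ with $|A|=\frac23$, not necessarily an interval. Because $\mathcal{E}^{\rm TF}_{a,b}$ is local, it is invariant under every equimeasurable rearrangement. Uniqueness up to translation is therefore false: for instance, $\frac32\big(\mathbbm{1}_{[0,1/3]}+\mathbbm{1}_{[5,16/3]}\big)$ is also a minimizer.

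What your argument proves, and what is true, is uniqueness among symmetric decreasing densities, with $\varrho^{\rm TF}_{1,1}=\frac32\mathbbm{1}_{[-1/3,1/3]}$. This is also the only form of uniqueness the paper uses later. In the Thomas--Fermi part of the proof of \Cref{thm:behaviors-h}, the limit profile arises from the rearrangements $\varrho_N^*$. So you should state and prove the uniqueness claim in that restricted sense.
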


\begin{theorem}[Mass concentration of NLS ground states]\label{thm:behavior-nls}
Let $\{a_{n}\},\{b_{n}\} \subset (0, +\infty)$ be such that $a_{n} \to a_{0} \in (0,+\infty]$ and $b_{n} \to b_{0} \in [0,+\infty]$. Let $\{v_{n}\}$ be a sequence of ground states of $E^{\rm NLS}_{a_{n},b_{n}}[1]$ given by \eqref{energy:nls}. We have the following.

\begin{enumerate}[label=(\roman*)]
\item {\bf Cubic(-quintic) approximation.} If $0 \leq b_{0} < +\infty$ then, up to a translation and extracting a subsequence,
\begin{equation}\label{cv:gs-cubic-quintic-1d}
\lim_{n\to+\infty} a_{n}^{-\frac{1}{2}}v_{n}(a_{n}^{-1}\cdot) = w_{0}
\end{equation}
strongly in $H^{1}(\mathbb R)$, where $w_{0}$ is a ground state of $E^{\rm NLS}_{1,b_{0}}[1]$. Furthermore,
\begin{equation}\label{cv:energy-cubic-quintic-1d}
\lim_{n\to+\infty} a_{n}^{-2}E^{\rm NLS}_{a_{n},b_{n}}[1] = E^{\rm NLS}_{1,b_{0}}[1].
\end{equation}

\item {\bf Thomas--Fermi approximation.} If $b_{0} = +\infty$ then, up to a translation,
\begin{equation}\label{cv:gs-h-tf}
\lim_{n\to+\infty} \frac{b_{n}}{a_{n}}v_{n}\left(\frac{b_{n}}{a_{n}} \cdot\right)^{2} = \varrho^{\rm TF}_{1,1}
\end{equation}
strongly in $L^{1}\cap L^{3}(\mathbb{R})$ for the whole sequence. Furthermore,
\begin{equation}\label{cv:energy-h-tf}
\lim_{n\to+\infty}\frac{b_{n}}{a_{n}^{2}}E^{\rm NLS}_{a_{n},b_{n}}[1] = E^{\rm TF}_{1,1}[1].
\end{equation}
\end{enumerate}
\end{theorem}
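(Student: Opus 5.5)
The proof rests on the exact scaling structure of $\mathcal{E}^{\rm NLS}_{a,b}$. For $v\in H^1(\mathbb R)$ with $\|v\|_{L^2}=1$ and $\ell>0$, set $v_\ell(x)=\ell^{1/2}v(\ell x)$, which also has unit mass. A direct computation gives
\begin{equation*}
\mathcal{E}^{\rm NLS}_{a,b}[v_\ell]=\ell^{2}\int_{\mathbb R}|v'|^{2}-\frac{a\ell}{2}\int_{\mathbb R}|v|^{4}+\frac{b\ell^{2}}{6}\int_{\mathbb R}|v|^{6} .
\end{equation*}
This identity drives both parts of the theorem.

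\textbf{Part (i), $0\le b_0<+\infty$.} Take $\ell=a$. The identity becomes $\mathcal{E}^{\rm NLS}_{a,b}[v_a]=a^{2}\,\mathcal{E}^{\rm NLS}_{1,b}[v]$, so $E^{\rm NLS}_{a_n,b_n}[1]=a_n^{2}E^{\rm NLS}_{1,b_n}[1]$. The rescaled functions $w_n:=a_n^{-1/2}v_n(a_n^{-1}\cdot)$ are therefore exact ground states of $E^{\rm NLS}_{1,b_n}[1]$.

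Here $a_0=+\infty$ is allowed, but after rescaling the coefficient $a$ no longer appears. The only issue left is the continuity of the problem in $b\in[0,\infty)$ together with compactness of minimizing sequences.

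\emph{Energy convergence.} By the trial state argument, using a ground state $w_0$ of $E^{\rm NLS}_{1,b_0}[1]$ as a test function,
\begin{equation*}
\limsup_{n\to+\infty} E^{\rm NLS}_{1,b_n}[1]\le \lim_{n\to+\infty}\mathcal{E}^{\rm NLS}_{1,b_n}[w_0]=E^{\rm NLS}_{1,b_0}[1].
\end{equation*}
For the lower bound, Gagliardo--Nirenberg gives $\int|w|^4\le C\|w'\|_{L^2}\|w\|_{L^2}^3$. Hence $\mathcal{E}^{\rm NLS}_{1,b}[w]\ge \|w'\|^2-C\|w'\|$, so $\{w_n\}$ is bounded in $H^1$. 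Then $\int|w_n|^6$ is bounded, which yields
\begin{equation*}
E^{\rm NLS}_{1,b_n}[1]=\mathcal{E}^{\rm NLS}_{1,b_0}[w_n]+\frac{b_n-b_0}{6}\int_{\mathbb R}|w_n|^6\ge E^{\rm NLS}_{1,b_0}[1]-o(1).
\end{equation*}
Together these give \eqref{cv:energy-cubic-quintic-1d}, and show that $\{w_n\}$ is a minimizing sequence for $E^{\rm NLS}_{1,b_0}[1]$.

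\emph{Compactness.} We apply concentration-compactness, using the strict binding inequality
\begin{equation*}
E^{\rm NLS}_{1,b_0}[1]<E^{\rm NLS}_{1,b_0}[\lambda]+E^{\rm NLS}_{1,b_0}[1-\lambda]\quad\text{for }0<\lambda<1,
\end{equation*}
together with $E^{\rm NLS}_{1,b_0}[\lambda]<0$ for every $\lambda>0$. The negativity follows from Theorem~\ref{thm:existence-nls}\ref{nls-gs-existence} applied with mass $\lambda$, or directly by spreading a trial state.

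Strict subadditivity is the main obstacle, because with the quintic term the simple homogeneity trick does not apply cleanly. I would argue as follows. Take $\theta>1$ and $u$ a minimizer at mass $\lambda$. Then $\sqrt\theta\,u$ has mass $\theta\lambda$ and
\begin{equation*}
\mathcal{E}^{\rm NLS}_{1,b_0}[\sqrt\theta u]=\theta\,\mathcal{E}^{\rm NLS}_{1,b_0}[u]-\frac{\theta^2-\theta}{2}\int|u|^4+\frac{b_0(\theta^3-\theta)}{6}\int|u|^6 .
\end{equation*}
The quartic correction is favourable and the sextic one is not, so this needs care. Instead I would use the alternative scaling $\theta u(\theta\,\cdot)$, which preserves the mass ratio differently, or combine mass scaling with a dilation. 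The aim is to show that $\lambda\mapsto E[\lambda]/\lambda$ is strictly decreasing, which I expect from the negativity of the energy.

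Vanishing is excluded since $\int|w_n|^4\not\to0$: otherwise the limit energy would be $\ge0$. Dichotomy is excluded by strict binding. Thus, after translations, $w_n\to w_0$ strongly in $L^2$, hence in $L^4\cap L^6$ by boundedness in $H^1$. Weak lower semicontinuity combined with energy convergence upgrades this to convergence of $\|w_n'\|_{L^2}$, giving strong $H^1$ convergence along a subsequence.

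\textbf{Part (ii), $b_0=+\infty$.} Now take $\ell=a/b$ and write $\rho=|v|^2$. The identity becomes
\begin{equation*}
\mathcal{E}^{\rm NLS}_{a,b}[v_{a/b}]=\frac{a^2}{b}\Big[\frac{1}{b}\int_{\mathbb R}|v'|^2+\mathcal{E}^{\rm TF}_{1,1}[|v|^2]\Big].
\end{equation*}
Thus $\frac{b_n}{a_n^2}E^{\rm NLS}_{a_n,b_n}[1]$ is the infimum of $\varepsilon_n\int|v'|^2+\mathcal E^{\rm TF}_{1,1}[|v|^2]$ with $\varepsilon_n=1/b_n\to0$.

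\emph{Lower bound.} Dropping the kinetic term gives a lower bound of $E^{\rm TF}_{1,1}[1]$.

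\emph{Upper bound.} Mollify $\sqrt{\varrho^{\rm TF}_{1,1}}$. This profile is an explicit compactly supported bounded function; whether it is in $H^1$ or not, a smoothing $\sqrt{\varrho^{\rm TF}}*\eta_\delta$, renormalised, has finite kinetic energy and TF energy close to the target. Let $n\to\infty$ first and then $\delta\to0$.

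\emph{Convergence of minimizers.} The rescaled densities $\rho_n$ are thus a minimizing sequence for $E^{\rm TF}_{1,1}[1]$, and they are symmetric decreasing. Bounds in $L^1\cap L^3$ and monotonicity give compactness via Helly's theorem, i.e.\ a.e.\ convergence. Mass loss at infinity is ruled out by the strict binding of the TF problem, which follows from its scaling, $E^{\rm TF}[\lambda]=\lambda^3E^{\rm TF}[1]$ by $\varrho\mapsto\lambda\varrho(\lambda\,\cdot)$. Uniqueness of the TF minimizer from Theorem~\ref{thm:existence-nls} then gives convergence of the whole sequence. Strong $L^3$ convergence follows from convergence of the energy together with the $L^2$ norm identification, by uniform convexity.
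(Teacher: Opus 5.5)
Your scaling identity is correct and gives $E^{\rm NLS}_{a,b}[1]=a^{2}E^{\rm NLS}_{1,b}[1]$ exactly. So the energy convergence in (i) and both energy bounds in (ii) go through as you wrote them.

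\textbf{Gap in part (i).} The strict binding inequality is the only nontrivial input for excluding dichotomy, and you leave it unproved. The scalings you suggest do not work. The map $\theta u(\theta\,\cdot)$ multiplies the kinetic and quartic terms by $\theta^{3}$ but the sextic one by $\theta^{5}$. Negativity of the energy alone does not make $\lambda\mapsto E[\lambda]/\lambda$ strictly decreasing for an inhomogeneous nonlinearity. The missing idea is the pure dilation used in the paper's proof of \Cref{thm:existence-h}. In one dimension, $u\mapsto u(\lambda\,\cdot)$ multiplies every $\int|u|^{p}$ by $\lambda^{-1}$ but $\int|u'|^{2}$ by $\lambda$. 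Let $v_\lambda$ be a minimizer at mass $\lambda\in(0,1)$; these exist because $\mathcal{E}^{\rm NLS}_{a,b}[\sqrt{\lambda}u]=\lambda\,\mathcal{E}^{\rm NLS}_{a\lambda,b\lambda^{2}}[u]$. Then $v_\lambda(\lambda\,\cdot)$ has unit mass and
\[
E^{\rm NLS}_{1,b_0}[1]\le \lambda^{-1}E^{\rm NLS}_{1,b_0}[\lambda]-(\lambda^{-1}-\lambda)\|v_\lambda'\|_{L^{2}}^{2},
\]
that is, $E^{\rm NLS}_{1,b_0}[\lambda]\ge \lambda E^{\rm NLS}_{1,b_0}[1]+(1-\lambda^{2})\|v_\lambda'\|_{L^{2}}^{2}$. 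Adding the analogous bound for $1-\lambda$ gives strict binding.

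\textbf{Gap in part (ii).} Here the step you rely on is false. The map $\varrho\mapsto\lambda\varrho(\lambda\,\cdot)$ preserves mass. The correct dilation $\varrho\mapsto\varrho(\cdot/\lambda)$ gives $E^{\rm TF}_{1,1}[\lambda]=\lambda E^{\rm TF}_{1,1}[1]$, not $\lambda^{3}E^{\rm TF}_{1,1}[1]$. In fact $\frac{t^{3}}{6}-\frac{t^{2}}{2}+\frac{3t}{8}=\frac{t}{6}(t-\frac32)^{2}\ge0$, so $\frac32$ times the characteristic function of any set of measure $\frac23$ is a minimizer. The TF problem is therefore exactly additive, $E^{\rm TF}[1]=E^{\rm TF}[\lambda]+E^{\rm TF}[1-\lambda]$, and no binding inequality can prevent loss of mass.

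The paper, in the TF part of the proof of \Cref{thm:behaviors-h}, instead gets strong $L^{2}$ convergence from monotonicity. A symmetric decreasing $\rho_n$ with $\int\rho_n=1$ satisfies $\rho_n(x)\le(2|x|)^{-1}$. Together with the uniform $L^{3}$ bound and Helly's pointwise convergence, this gives $\rho_n\to\varrho_0$ in $L^{2}$. Fatou on the cubic term then yields $E^{\rm TF}_{1,1}[1]\ge\mathcal{E}^{\rm TF}_{1,1}[\varrho_0]\ge\big(\int\varrho_0\big)E^{\rm TF}_{1,1}[1]$. Since $E^{\rm TF}_{1,1}[1]<0$, this forces $\int\varrho_0=1$. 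The mechanism is that escaping mass carries no quartic energy in the limit, not that splitting costs energy.

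Your claim that the $\rho_n$ are themselves symmetric decreasing also needs justification, for instance uniqueness up to translation of positive decaying solutions of the Euler--Lagrange ODE. Without it you only get the result for $\rho_n^{*}$. Since TF minimizers are not unique, this point is not cosmetic.
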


The primary contribution of this article is the characterization of the mass concentration phenomenon within the Hartree framework \eqref{energy:h}. In view of \Cref{thm:existence-nls,thm:behavior-nls}, analogous behavior is expected in the Hartree setting. However, proving the existence of Hartree ground states presents immediate mathematical challenges, as the intermediate potentials lack explicit radial symmetry and monotonicity. We overcome this obstacle by employing Lions' concentration compactness method \cite{Lions-84a}. Furthermore, under suitable structural assumptions on the two-body and three-body interaction profiles, we establish the existence and asymptotic mass concentration of the corresponding ground states. Our main findings are summarized below.

\begin{theorem}[Existence of Hartree ground states]\label{thm:existence-h}
Let $\alpha,\beta>0$ and assume that $U^{\rm 2B}_{N^{\alpha}}$, $U^{\rm 3B}_{N^{\beta}}$ satisfy \eqref{scaled-interaction}-\eqref{condition:2-3-body}. Let $a>0$ and $b \geq 0$ be fixed. Then, the minimization problems \eqref{energy:h} admit ground states for $N$ sufficient large.
\end{theorem}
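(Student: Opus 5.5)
The plan is to run Lions' concentration-compactness argument on a minimizing sequence of $E^{\rm H}_{a,b,N}[1]$ and to rule out vanishing and dichotomy with two properties of the Hartree energy. The first is $E^{\rm H}_{a,b,N}[1]<0$ for $N$ large. The second is the strict subadditivity
\[
E^{\rm H}_{a,b,N}[1]<E^{\rm H}_{a,b,N}[\lambda]+E^{\rm H}_{a,b,N}[1-\lambda]
\]
for every $0<\lambda<1$. The role of "$N$ sufficiently large" is to make these strict inequalities available through comparison with the NLS problem of \Cref{thm:existence-nls}.

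\textbf{Step 1 (bounds and negativity).} Young's inequality and $\|U^{\rm 2B}_{N^\alpha}\|_{L^1}=1$ give $\iint U^{\rm 2B}_{N^\alpha}(x-y)|v(x)|^2|v(y)|^2 \le \|v\|_{L^4}^4$. By the 1D Gagliardo--Nirenberg inequality $\|v\|_{L^4}^4\le C\|v'\|_{L^2}\|v\|_{L^2}^3$, and the three-body term is nonnegative. Hence
\[
\mathcal E^{\rm H}_{a,b,N}[v]\ge \|v'\|^2-C a\|v'\|,
\]
uniformly in $N$. So the energy is bounded below, and minimizing sequences are bounded in $H^1$.

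For negativity, take the NLS ground state $Q$ of $E^{\rm NLS}_{a,b}[1]<0$, which exists by \Cref{thm:existence-nls}\ref{nls-gs-existence}. Since $\widehat{U^{\rm kB}}\in L^1$ and $\widehat{U^{\rm kB}}(0)=1$, dominated convergence in Fourier space gives $\mathcal E^{\rm H}_{a,b,N}[Q]\to \mathcal E^{\rm NLS}_{a,b}[Q]$. The three-body term converges because $U_{N^\beta}*|Q|^2\to |Q|^2$ in $L^2\cap L^\infty$. Therefore $\limsup_N E^{\rm H}_{a,b,N}[1]\le E^{\rm NLS}_{a,b}[1]<0$. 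The same argument with $\lambda$ in place of $1$, using the scaling of the NLS problem, gives $E^{\rm H}_{a,b,N}[\lambda]<0$ for each fixed $\lambda$ and $N$ large.

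\textbf{Step 2 (strict subadditivity).} This is the main obstacle, because the interaction has two homogeneities. Write $v=\sqrt\lambda\, w$ with $\|w\|_2=1$. Then
\[
\mathcal E[\sqrt{\theta}\,v]=\theta\|v'\|^2-\theta^2\tfrac a2 I_2(v)+\theta^3\tfrac b6 I_3(v).
\]
The usual trick $E[\theta\lambda]<\theta E[\lambda]$ for $\theta>1$ needs the nonlinear part $-\tfrac a2\theta I_2+\tfrac b6\theta^2 I_3$ to be negative. That is the case if near-minimizers satisfy $\tfrac a2 I_2>\tfrac b6\theta I_3$ with some margin.

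I would obtain this margin from the NLS comparison. Near-minimizers at mass $\lambda$ have negative energy, which forces $\tfrac a2 I_2(v)>\tfrac b6 I_3(v)+\|v'\|^2$. What remains is a lower bound on the kinetic term, or on $\tfrac a2 I_2$, uniformly along minimizing sequences. This follows from $E^{\rm H}[\lambda]\le -c_\lambda<0$ together with Step 1, which gives $I_2\ge c>0$ and then $\|v'\|\ge c$ by Gagliardo--Nirenberg.

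The dangerous regime is large $\theta$, since the cubic-in-mass term $\theta^3 I_3$ then competes. If the scaling argument fails there, the fallback is a direct argument in the dichotomy step. Two pieces $v_1,v_2$ with masses $\lambda,1-\lambda$ and disjoint supports far apart can be translated so that they overlap. This strictly lowers the energy because the cross terms of the attractive part dominate the repulsive cross terms when the densities are spread at scale $\gg N^{-\alpha},N^{-\beta}$.

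I expect this to require $N$ large, with constants depending on $a,b$. The smallness comes from the NLS profile being bounded, so that $I_3\lesssim \|v\|_\infty^2 I_2$ stays controlled.

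\textbf{Step 3 (compactness).} Vanishing would give $\|v_n\|_{L^4},\|v_n\|_{L^6}\to0$ by Lions' lemma, so $\liminf\mathcal E\ge 0$, contradicting Step 1. Dichotomy is excluded by Step 2, with the cross terms vanishing because the pieces are separated and $U^{\rm kB}\in L^1$. Hence, up to translations, $v_n\to v$ strongly in $L^2$, and by interpolation also in $L^4$ and $L^6$. Lower semicontinuity of the kinetic term and continuity of the interaction terms in $L^4\cap L^6$ then show that $v$ is a ground state.
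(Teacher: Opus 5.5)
\textbf{Gap: Step 2 is not established.} Step 3 needs strict subadditivity of $E^{\rm H}_{a,b,N}$ for every $\lambda\in(0,1)$ at fixed $N$, and the proposal does not prove it.

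First, your amplitude-scaling criterion is miscomputed. One has
\[
\mathcal E[\sqrt{\theta}\, v]-\theta\,\mathcal E[v]=\theta(\theta-1)\Bigl[(\theta+1)\tfrac b6 I_3(v)-\tfrac a2 I_2(v)\Bigr],
\]
so you need $\tfrac a2 I_2>(\theta+1)\tfrac b6 I_3$, not $\tfrac a2 I_2>\theta\tfrac b6 I_3$. Negative energy only gives $\tfrac a2 I_2>\tfrac b6 I_3+\|v'\|^2$, and a lower bound on $\|v'\|$ does not enter the criterion at all.

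Second, the criterion genuinely fails when the fixed $b$ is large, a case the theorem covers. As $N\to+\infty$, near-minimizers at mass $\mu$ approach NLS ground states. For large $b$ these are close to the flat-top profile at density $\rho_0=\frac{3a}{2b}$. Then $I_2\approx\rho_0\mu$, $I_3\approx\rho_0^2\mu$, and $\tfrac a2 I_2-(\theta+1)\tfrac b6 I_3\approx\tfrac{a\rho_0\mu}{4}(1-\theta)<0$ for every $\theta>1$. The bound $I_3\lesssim\|v\|_{L^\infty}^2 I_2$ does not help, because $\|v\|_{L^\infty}^2$ is then close to $\frac{3a}{2b}$: bounded, but not small. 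Your scaling works only for small masses $\mu\lesssim 1/b$, where negativity forces $\|v\|_{L^\infty}^2\le\frac a2\mu^2$.

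The fallback of overlapping the pieces is not an argument either. The function $v_1+v_2(\cdot-y)$ no longer has mass $1$, and its kinetic energy acquires cross terms. Moreover, in exactly this regime overlapping pushes the density towards $2\rho_0$. There the energy per particle $-\frac a2\rho+\frac b6\rho^2$ equals $0$ instead of $-\frac{a\rho_0}{4}$, so overlapping raises the energy.

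\textbf{The missing idea.} The paper never proves strict binding for $\mathcal E^{\rm H}_{a,b,N}$ itself; this is what ``$N$ sufficiently large'' is really for. (Negativity holds for every $N$ and every mass by testing $\ell^{1/2}\phi(\ell\,\cdot)$ with small $\ell$.)
\begin{enumerate}
\item In the dichotomy case, the $N$-uniform $H^1$ bound, \Cref{lem:cv-hartree-potentials} and Brezis--Lieb on the NLS energy give $E^{\rm H}_{a,b,N}[1]\ge E^{\rm NLS}_{a,b}[\lambda]+E^{\rm NLS}_{a,b}[1-\lambda]-o(1)_{N\to+\infty}$.
\item This is compared with the upper bound $E^{\rm H}_{a,b,N}[1]\le E^{\rm NLS}_{a,b}[1]+o(1)_{N\to+\infty}$.
\item Strict binding for NLS comes from the pure dilation $v_\lambda(\lambda\,\cdot)$. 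It scales the quartic and sextic integrals identically and gives $E^{\rm NLS}_{a,b}[\lambda]\ge\lambda E^{\rm NLS}_{a,b}[1]+(1-\lambda^2)\|v_\lambda'\|_{L^2}^2$.
\end{enumerate}
That dilation is unavailable for the Hartree functional because it rescales the interaction ranges.

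\textbf{A caution about that route.} The NLS margin tends to $0$ as $\lambda\to0$ or $\lambda\to1$, while $\lambda$ depends on $N$. So the comparison alone only excludes splittings with $\lambda$ bounded away from the endpoints.
\begin{itemize}
\item Small $\lambda$ is removed by translating to capture a fixed local mass. Negativity and the uniform $H^1$ bound give $\sup_y\int_{y-1}^{y+1}|v_k|^2\ge c>0$ uniformly in $N$.
\item $\lambda$ near $1$ needs an $N$-uniform slope bound $E^{\rm H}_{a,b,N}[\lambda]-E^{\rm H}_{a,b,N}[1]\ge c(1-\lambda)$, to beat $E^{\rm H}_{a,b,N}[1-\lambda]\ge-\frac{a^2}{16}(1-\lambda)^3$.
\end{itemize}
For that slope bound, your amplitude scaling is the natural tool, used only to first order in $\theta-1$ around near-NLS-minimizers. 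Their chemical potential $2\|v'\|_{L^2}^2-E^{\rm NLS}_{a,b}[1]$ is positive.
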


\begin{theorem}[Mass concentration of Hartree ground states]\label{thm:behaviors-h}
Let $\alpha,\beta>0$ and assume that $U^{\rm 2B}_{N^{\alpha}}$, $U^{\rm 3B}_{N^{\beta}}$ satisfy \eqref{scaled-interaction}-\eqref{condition:2-3-body}. Let $\{a_{N}\} \subset (a_{*},+\infty)$, $\{b_{N}\} \subset (0, +\infty)$ be such that $a_{N} \to a_{0} \in (0,+\infty]$ and $b_{N} \to b_{0} \in [0,+\infty]$, as $N\to+\infty$. Let $\{v_{N}\}$ be a sequence of ground states of $E^{\rm H}_{a_{N},b_{N},N}[1]$ given by \eqref{energy:h}, for $N$ sufficient large. We have the following.
\begin{enumerate}[label=(\roman*)]
\item\label{thm:behavior-h-cubic-quintic} {\bf Cubic-quintic approximation.} Assume that $b_{0} \in [0,+\infty)$, $a_{N} \ll N^{\alpha}$, and $a_{N} \ll N^{\beta}$. Then, up to a translation and extracting a subsequence,
\begin{equation}\label{cv:gs-h-cubic-quintic}
\lim_{N\to+\infty} a_{N}^{-\frac{1}{2}}v_{N}(a_{N}^{-1}\cdot) = w_{0}
\end{equation}
strongly in $H^{1}(\mathbb R)$, where $w_{0}$ is a ground state of $E^{\rm NLS}_{1,b_{0}}[1]$. Furthermore,
\begin{equation}\label{cv:energy-h-cubic-quintic}
\lim_{N\to+\infty} a_{N}^{-2}E^{\rm H}_{a_{N},b_{N},N}[1] = E^{\rm NLS}_{1,b_{0}}[1].
\end{equation}

\item\label{thm:behavior-h-tf} {\bf Thomas--Fermi approximation.} Assume that $b_{0} = +\infty$, $\dfrac{a_{N}^{2}}{b_{N}} \ll N^{2\alpha}$, and $a_{N}b_{N}^{2} \ll N^{\beta}$. Then, up to a translation,
\begin{equation}\label{cv:gs-h-tf}
\lim_{N\to+\infty} \frac{b_{N}}{a_{N}} v_{N}^{*}\left(\frac{b_{N}}{a_{N}}\cdot\right)^{2} = \varrho^{\rm TF}_{1,1}
\end{equation}
strongly in $L^{1}\cap L^{3}(\mathbb{R})$, where $v_{N}^{*}$ is the symmetric decreasing rearrangement of $v_{N}$. Furthermore,
\begin{equation}\label{cv:energy-h-tf}
\lim_{N\to+\infty}\frac{b_{N}}{a_{N}^{2}}E^{\rm H}_{a_{N},b_{N},N}[1] = E^{\rm TF}_{1,1}[1].
\end{equation}
\end{enumerate}
\end{theorem}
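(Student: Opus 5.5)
The plan is to compare the Hartree and NLS energies at the appropriate scale, and then use the compactness built into Theorem \ref{thm:behavior-nls} to identify the limits of the minimizers.

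\textbf{Scaling.} For a ground state $v_N$ set $w_N(x):=a_N^{-1/2}v_N(a_N^{-1}x)$, so that $\|w_N\|_{L^2}=1$. A direct change of variables gives
\begin{equation*}
\mathcal{E}^{\rm H}_{a_N,b_N,N}[v_N]=a_N^{2}\,\mathcal{E}^{\rm H,N}[w_N].
\end{equation*}
Here $\mathcal{E}^{\rm H,N}$ is the Hartree functional with coupling constants $1$ and $b_N$. Its kernels are $U^{\rm 2B}_{N^\alpha/a_N}$ and $U^{\rm 3B}_{N^\beta/a_N}$, which are approximate deltas because $N^\alpha/a_N\to\infty$ and $N^\beta/a_N\to\infty$.

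\textbf{Upper bound.} I test with a ground state $w_0$ of $E^{\rm NLS}_{1,b_0}[1]$, which exists by Theorem \ref{thm:existence-nls}; if needed I first regularize it by a fixed $H^1$ approximation. Since $|w_0|^2\in H^1$, in Fourier variables the two-body term is
\begin{equation*}
\int \widehat{U^{\rm 2B}}(k a_N/N^\alpha)\,\bigl|\widehat{|w_0|^2}(k)\bigr|^2\,{\rm d}k .
\end{equation*}
Because $\widehat{U^{\rm 2B}}$ is bounded and continuous with $\widehat{U^{\rm 2B}}(0)=1$, dominated convergence sends this to $\int|w_0|^4$. The three-body term equals $\int |w_0|^2\,(U_\epsilon*|w_0|^2)^2$ with $U_\epsilon*|w_0|^2\to|w_0|^2$ in $L^\infty$, so it converges to $\int|w_0|^6$; the coefficient $b_N\to b_0$ is bounded. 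This yields $\limsup a_N^{-2}E^{\rm H}\le E^{\rm NLS}_{1,b_0}[1]$.

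\textbf{Lower bound and compactness.} The three-body term is nonnegative, so I drop it. For the two-body term, Young's inequality and the 1D Gagliardo--Nirenberg inequality $\|w\|_4^4\le C\|w'\|_2\|w\|_2^3$ give $\int (U*|w|^2)|w|^2\le \|w\|_4^4$, which shows $\|w_N'\|_2$ is bounded. Next I quantify the delta-approximation error. Writing $\rho=|w|^2$, the two-body error is
\begin{equation*}
\int\bigl(1-\widehat{U}(k\epsilon)\bigr)|\hat\rho(k)|^2\,{\rm d}k .
\end{equation*}
I split at $|k|\lessgtr K$: the low frequencies contribute $o(1)$, and the high frequencies are bounded by $2K^{-2}\|\rho'\|_2^2\lesssim K^{-2}\|w'\|_2^2$, using $\|\rho'\|_2\le 2\|w\|_\infty\|w'\|_2$. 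The three-body error on bounded $H^1$ sets is $o(1)$ by a similar argument, using $U_\epsilon*\rho\to\rho$ uniformly on $H^1$-bounded sets (the modulus of continuity is $\rho$ H\"older-$\tfrac12$) and $b_N$ bounded. Hence $a_N^{-2}E^{\rm H}\ge \mathcal{E}^{\rm NLS}_{1,b_N}[w_N]-o(1)\ge E^{\rm NLS}_{1,b_N}[1]-o(1)$, and $E^{\rm NLS}_{1,b}[1]$ is continuous in $b$. This gives the energy limit \eqref{cv:energy-h-cubic-quintic}, and it shows $\{w_N\}$ is a minimizing sequence for $E^{\rm NLS}_{1,b_0}[1]$. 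A standard concentration-compactness argument then applies. Vanishing is excluded because the energy is strictly negative. Dichotomy is excluded by the strict subadditivity $E[1]<E[\lambda]+E[1-\lambda]$, which follows from the scaling $v\mapsto\sqrt{\theta}v$ and negativity. After translation we get $L^2$ convergence, then $H^1$ convergence by convergence of energies, to a ground state $w_0$.

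\textbf{Thomas--Fermi regime.} Here I use the scaling $\rho_N(x)=\frac{b_N}{a_N}|v_N(\frac{b_N}{a_N}x)|^2$. The energy becomes $\frac{a_N^2}{b_N}\bigl[\frac{a_N^2}{b_N^{3}}\|(\sqrt{\rho_N})'\|^2+\mathcal{E}^{\rm TF}\text{ with smeared kernels}\bigr]$, where the widths are $\epsilon_2=\frac{a_N}{b_N}N^{-\alpha}$ and $\epsilon_3=\frac{a_N}{b_N}N^{-\beta}$. For the upper bound I use a mollified $\varrho^{\rm TF}_{1,1}$, whose kinetic cost vanishes. For the lower bound I discard the kinetic term and control the kernel errors with the hypotheses $a_N^2/b_N\ll N^{2\alpha}$ and $a_Nb_N^2\ll N^\beta$, combined with the kinetic bound $\|(\sqrt{\rho_N})'\|^2\lesssim b_N^3/a_N^2$. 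Passing to symmetric decreasing rearrangements $v_N^*$ decreases the kinetic energy and, by Riesz rearrangement, increases the attractive two-body term. The three-body term, however, is not rearrangement-monotone in the right direction, which is exactly why the theorem is stated for $v_N^*$. I then use Helly's selection theorem to extract a pointwise limit and show it is the TF minimizer. The uniqueness of $\varrho^{\rm TF}_{1,1}$ in Theorem \ref{thm:existence-nls} gives convergence of the whole sequence, and convergence of the $L^3$ norms upgrades this to strong convergence in $L^1\cap L^3$.

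\textbf{Main obstacle.} I expect the hardest step to be the quantitative kernel-replacement error in the TF lower bound, where there is no a priori $H^1$ bound at unit scale. The three-body error has to be estimated through $\|\rho'\|$-type bounds in terms of the vanishing kinetic coefficient, and this is where the precise exponents in the assumptions must enter.
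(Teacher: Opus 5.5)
Your architecture matches the paper's. In (i) you use trial-state upper bounds, kernel replacement and concentration--compactness for $w_N$; in (ii) you use a mollified TF trial state followed by Helly's theorem on rearrangements. Your kernel estimates in (i) are sound and need no moment condition on $U^{\rm kB}$: Fourier splitting for the two-body term, and the H\"older-$\tfrac12$ modulus of $\rho=|w|^2$ for $U_\epsilon*\rho-\rho$. However, strict subadditivity does not follow from $v\mapsto\sqrt\theta\,v$ once $b_0>0$, because
\[
\mathcal{E}^{\rm NLS}_{1,b_0}[\sqrt\theta\,v]-\theta\,\mathcal{E}^{\rm NLS}_{1,b_0}[v]=(\theta^2-\theta)\Bigl[-\frac12\int_{\mathbb R}|v|^4+\frac{b_0(\theta+1)}{6}\int_{\mathbb R}|v|^6\Bigr],
\]
which is positive, e.g., for near-plateau profiles with $\frac{b_0}{6}\int|v|^6\approx\frac14\int|v|^4$. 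The paper uses the dilation $v\mapsto v(\cdot/\theta)$ instead. It multiplies the mass and both nonlinear terms by $\theta$ and the kinetic term by $\theta^{-1}$. This gives $E^{\rm NLS}_{1,b_0}[\lambda]\ge\lambda E^{\rm NLS}_{1,b_0}[1]+(1-\lambda^2)\|v_\lambda'\|_{L^2}^2$ for $\lambda<1$, hence strict binding.

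In (ii) there are two real problems.

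First, the kinetic coefficient in your rescaled bracket is $1/b_N$, not $a_N^2/b_N^3$, since $\|v_N'\|_{L^2}^2=\frac{a_N^2}{b_N^2}\|(\sqrt{\rho_N})'\|_{L^2}^2$. This is what makes your upper bound work: a fixed-scale mollification costs $O(1/b_N)$ and only needs $\frac{a_N}{b_N}N^{-\alpha}\to0$. But it means the bound $\|(\sqrt{\rho_N})'\|_{L^2}^2\lesssim b_N^3/a_N^2$, which you planned to feed into your ``main obstacle'', is unfounded. The paper uses the crude bound $\|v_N'\|_{L^2}\lesssim a_N$, which follows from $\|v'\|^2-\frac{a_N}{2}\int|v|^4\ge\frac12\|v'\|^2-\frac{a_N^2}{8}$. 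With a slowly growing cutoff $L$, the three-body error is then $\lesssim L\,b_NN^{-\beta}\|v_N'\|_{L^2}^3\lesssim\frac{a_N^2}{b_N}\,L\,a_Nb_N^2N^{-\beta}$. This is exactly where $a_Nb_N^2\ll N^\beta$ enters; the two-body error has the favorable sign in the lower bound.

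Second, you misplace the role of the rearrangement and omit the compactness step. Riesz on the Hartree functional is neither available ($U^{\rm 2B}$ is not assumed symmetric decreasing) nor needed. The behaviour of the three-body term under rearrangement is also irrelevant. Once the kernels are removed you have $E^{\rm TF}_{1,1}[1]+o(1)\ge-\frac12\int\rho_N^2+\frac{1-o(1)}{6}\int\rho_N^3$, which transfers to $\rho_N^*$ by equimeasurability. Rearranging is necessary because $\frac32$ times the indicator of any set of measure $\frac23$ minimizes $\mathcal{E}^{\rm TF}_{1,1}$. So $E^{\rm TF}_{1,1}[\lambda]=\lambda E^{\rm TF}_{1,1}[1]$, there is no strict binding, and uniqueness holds only among symmetric decreasing profiles.

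The negative term $-\frac12\int\rho^2$ is not weakly lower semicontinuous, so you need strong convergence in $L^2$. It comes from the bound $\rho_N^*(x)\le\min\{(2|x|)^{-1},(\|\rho_N\|_{L^3}^3/(2|x|))^{1/3}\}$, which gives strong convergence in $L^p$ for $1<p<3$. Then $\mathcal{E}^{\rm TF}_{1,1}[\varrho_0]\ge(\int\varrho_0)\,E^{\rm TF}_{1,1}[1]\ge E^{\rm TF}_{1,1}[1]$ together with $E^{\rm TF}_{1,1}[1]<0$ forces $\int\varrho_0=1$ and $\|\rho_N^*\|_{L^3}\to\|\varrho_0\|_{L^3}$. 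Your ``show it is the TF minimizer'' skips this mass-conservation argument.
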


\begin{remark}
\begin{itemize}
\item At fixed $a > 0$ and $b \geq 0$, the convergence of $E^{\rm H}_{a,b,N}[1]$ to $E^{\rm NLS}_{a,b}[1]$ as well as of ground states, as $N \to +\infty$, was covered by \eqref{cv:energy-h-cubic-quintic} and \eqref{cv:gs-h-cubic-quintic}. The condition $a_{N} \ll N^{\alpha}$ and $a_{N} \ll N^{\beta}$ are then trivial. 

\item The condition $a_{N}N^{-\beta}  \ll 1$ is to ensure the convergence of the three-body interaction. This condition can be discarded, in the vanishing case of the three-body interaction, i.e., $b_{0} \equiv 0$.

\item In the TF regime, establishing the convergence of Hartree ground states reduces to proving the density convergence of corresponding ``approximate" TF ground states. This convergence is notoriously difficult to establish directly due to the local, non-compact nature of the TF functional. By leveraging the favorable properties of radially symmetric, decreasing functions, we instead establish the strong convergence \eqref{cv:gs-h-tf} for the symmetric decreasing rearrangements of the Hartree ground states. Proving the direct convergence of the original, unrearranged Hartree ground states remains an open problem.
\end{itemize}
\end{remark}

A part of this work is the derivation of the effective homogeneous one-dimensional cubic-quintic nonlinear Schrödinger (NLS) theory as the mean-field model of quantum Bose gases. As reflected in the cubic-quintic NLS framework, the repulsive three-body interaction endows the system with robust stability. Notably, even in the absence of external trapping potentials, the competition between attractive two-body and repulsive three-body forces induces self-trapping behavior. In this setting, the existence of NLS ground states requires an attractive two-body interaction (see \Cref{thm:existence-nls}), while the asymptotic structure of the system is largely driven by the repulsive three-body interaction (see \Cref{thm:behavior-nls}).

Building upon the mean-field behavior observed in the intermediate Hartree theory, we extend our energy estimates to the full $N$-body quantum Hamiltonian \eqref{hamiltonian}. For inhomogeneous Bose gases \cite{LewNamRou-16,Rougerie-20,NguRic-24}, the mean-field limit is typically established via quantum de Finetti theorems \cite{Rougerie-EMS}, which rely on localizing quantum states into finite-dimensional subspaces using spectral projections of the one-body Hamiltonian. This strategy, however, fundamentally requires a confining external potential and is therefore inapplicable to unconfined, homogeneous environments. To circumvent this issue, we employ an approach based on Onsager-type inequalities, as developed for homogeneous systems with two-body interactions only \cite{Lewin-ICMP,LewNamRou-17-proc,DinNguRou-24}. Although this methodology yields lower convergence rates and operates within low-density regimes, it effectively handles unconfined interactions. Recently, this framework was successfully adapted to two-dimensional homogeneous Bose gases with three-body interactions \cite{Nguyen-26-MZ}. Here, we extend these techniques to the one-dimensional setting to obtain the following result.

\begin{theorem}\label{thm:qm}
Under the same assumptions as in \Cref{thm:behaviors-h} and the additional assumptions that $\widehat{U^{\rm 3B}} \geq 0$ and $a_{N}N^{\alpha-1} + b_{N}N^{2\beta-1} \ll \ell_{N}$, where
$$
\ell_{N} = 
\begin{cases}
a_{N}^{2} & \text{if } b_{0} \in [0,+\infty), \\
\dfrac{a_{N}^{2}}{b_{N}} & \text{if } b_{0} = +\infty,
\end{cases}
$$
we have the convergence of the quantum ground state energy
$$
\ell_{N}^{-1}E_{a_{N},b_{N},N}^{\mathrm{QM}} = 
\begin{cases}
E^{\rm NLS}_{1,b_{0}}[1] & \text{if } b_{0} \in [0,+\infty), \\
E^{\rm TF}_{1,1}[1] & \text{if } b_{0} = +\infty.
\end{cases}
$$
\end{theorem}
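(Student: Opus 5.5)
The upper bound follows from the trial-state inequality $E^{\rm QM}_{a_N,b_N,N}\le E^{\rm H}_{a_N,b_N,N}[1]$, obtained by testing on factorized states $v^{\otimes N}$. By \Cref{thm:behaviors-h}, $\ell_N^{-1}E^{\rm H}_{a_N,b_N,N}[1]$ converges to $E^{\rm NLS}_{1,b_0}[1]$ when $b_0<\infty$ and to $E^{\rm TF}_{1,1}[1]$ when $b_0=+\infty$, since $\ell_N=a_N^2$ or $a_N^2/b_N$ respectively. The assumptions of \Cref{thm:behaviors-h} are included here, so the $\limsup$ is immediate. The work is in the matching lower bound, which we prove by an Onsager-type argument adapted to one dimension, as in the two-dimensional case.

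For the lower bound we treat the two-body term first. We write $U^{\rm 2B}_{N^\alpha}=W*W$ (with $\widehat W=(\widehat{U^{\rm 2B}})^{1/2}$ if nonnegative), or otherwise split $\widehat{U^{\rm 2B}}$ into positive and negative parts. We then use the Onsager-type identity
$$\sum_{i<j}U(x_i-x_j)=\tfrac12\iint U(x-y)\,{\rm d}\mu_X{\rm d}\mu_X-\tfrac N2U(0),\qquad \mu_X=\sum_i\delta_{x_i}.$$
Completing the square against a one-body density $\rho$ gives
$$-\frac{1}{N-1}\sum_{i<j}U^{\rm 2B}_{N^\alpha}(x_i-x_j)\ge -\frac{N}{N-1}\sum_i (U^{\rm 2B}_{N^\alpha}*\rho)(x_i)+\frac{N^2}{2(N-1)}D(\rho,\rho)-\frac{N}{2(N-1)}U^{\rm 2B}_{N^\alpha}(0),$$
where $D(\rho,\rho)=\iint U^{\rm 2B}_{N^\alpha}(x-y)\rho(x)\rho(y)\,{\rm d}x\,{\rm d}y$. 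The attractive sign is what makes this nontrivial; in practice one follows \cite{Lewin-ICMP,DinNguRou-24} and reduces to a one-body mean-field operator via the variational principle. The error $a_N U^{\rm 2B}_{N^\alpha}(0)/N\sim a_NN^{\alpha-1}$ per particle must be $o(\ell_N)$, which is exactly the assumption $a_NN^{\alpha-1}\ll\ell_N$.

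For the repulsive three-body term we use $\widehat{U^{\rm 3B}}\ge0$ and lower it to a quantity in the empirical measure. The product structure gives
$$\sum_{i<j<k}U_{ij}U_{ik}\;\ge\;\frac{1}{6}\sum_i\Big(\sum_{j\ne i}U(x_i-x_j)\Big)^2-(\text{diagonal terms}),$$
and then $\sum_i f(x_i)^2\ge \sum_i (2f g-g^2)(x_i)$ with $f=U*\mu_X$ and $g=N\,U^{\rm 3B}_{N^\beta}*\rho$. Self-interaction terms involve $U^{\rm 3B}_{N^\beta}(0)\sim N^{\beta}$ appearing against $\sum_{j}U$, and produce errors of size $b_NN^{2\beta-1}$ per particle after normalization. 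We control these by the hypothesis $b_NN^{2\beta-1}\ll\ell_N$. The positivity of $\widehat{U^{\rm 3B}}$ is used to bound the linear-in-$\mu_X$ cross terms from below, via the Fourier-positivity of $\iint U(\mu-N\rho)(\mu-N\rho)\ge0$.

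Combining the two steps, $H_{a_N,b_N,N}/N$ is bounded below by $\sum_i h_\rho(x_i)/N$ plus constants plus $o(\ell_N)$, where $h_\rho$ is a one-body Schrödinger operator with a mean-field potential. Taking the ground state of $h_\rho$ and optimizing over $\rho$ (choosing $\rho=|v_N|^2$ with $v_N$ a Hartree minimizer) recovers $E^{\rm H}_{a_N,b_N,N}[1]-o(\ell_N)$. The limit then follows from \Cref{thm:behaviors-h} once more. The \textbf{main obstacle} is the three-body lower bound. The quadratic-in-$\mu_X$ completion produces a cubic structure, so the cross terms are not automatically of definite sign. The tentative remedy is to handle the leftover cross terms with the pointwise bound $U^{\rm 3B}_{N^\beta}\le N^\beta\|U^{\rm 3B}\|_\infty$, using a cutoff if necessary, and to absorb them via $b_NN^{2\beta-1}\ll\ell_N$. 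A secondary risk is that this step costs extra powers of $N$, which would force the low-density rates stated in the hypotheses.
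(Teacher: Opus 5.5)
Your upper bound ($E^{\rm QM}\le E^{\rm H}[1]$ via $v^{\otimes N}$, then \Cref{thm:behaviors-h}) is exactly the paper's. The lower bound, however, has genuine gaps.

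\textbf{Two-body step.} Your displayed inequality is false. Rearranged, it says $\iint U^{\rm 2B}_{N^\alpha}\,{\rm d}(\mu_X-N\rho)\,{\rm d}(\mu_X-N\rho)\le 2N\,U^{\rm 2B}_{N^\alpha}(0)$. This fails, for instance, when all $x_i$ coincide and $\rho$ is spread out, since the left side is then about $N^2U^{\rm 2B}_{N^\alpha}(0)$. Completing the square against a positive-type kernel only bounds $\sum_{i<j}U$ from \emph{below}, so it handles repulsion, not attraction. Splitting $\widehat{U^{\rm 2B}}$ into positive and negative parts does not help, because the positive part still carries the attractive sign. This is why the paper combines Onsager with L\'evy-Leblond's technique and uses Hoffmann--Ostenhof for the kinetic energy. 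You cite this literature, but the inequality you wrote cannot be the mechanism.

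\textbf{Three-body step.} This is the main gap, and your proposed remedy does not close it. After $(f-g)^2\ge 0$, the cross term is $\sum_i f(x_i)g(x_i)=\sum_{i\ne j}U^{\rm 3B}_{N^\beta}(x_i-x_j)\,g(x_i)$.
\begin{itemize}
\item It is quadratic, not linear, in $\mu_X$.
\item It is not an error term. In the mean-field regime it has the same size as $\sum_i g(x_i)^2$, so it carries (twice) the leading quintic energy.
\item Its symmetrized kernel $\tfrac12\bigl(g(x)+g(y)\bigr)U^{\rm 3B}_{N^\beta}(x-y)$ is not of positive type even when $\widehat{U^{\rm 3B}}\ge 0$, so the square completion you invoke does not apply.
\item Dropping the term flips the sign of the quintic contribution. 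Bounding it by $\|U^{\rm 3B}_{N^\beta}\|_{L^\infty}$ costs $O(b_NN^{2\beta})$ per particle, a factor $N$ beyond what $b_NN^{2\beta-1}\ll\ell_N$ controls.
\end{itemize}
The missing idea is the paper's modified Onsager lemma. You must pass to the kernel $\sqrt{g(x)}\,U^{\rm 3B}_{N^\beta}(x-y)\sqrt{g(y)}$ with $g\propto U^{\rm 3B}_{N^\beta}*\rho$, which is exactly the square-root structure in \eqref{functional:h-modified}. One way is to use $U^{\rm 3B}\ge 0$ together with $\tfrac12(g(x)+g(y))\ge\sqrt{g(x)g(y)}$. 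This new kernel is a positive-type convolution conjugated by a multiplication operator, hence positive semi-definite, so Onsager applies with self-interaction error $O(b_NN^{2\beta-1})$.

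\textbf{Final step.} As a consequence, the lower bound lands on the modified Hartree energy, $E^{\rm QM}\ge E^{\rm mH}_{a_N,b_N,N}[1]-C(a_NN^{\alpha-1}+b_NN^{2\beta-1})$, as in \Cref{thm:qm-h}. It does not land on $E^{\rm H}-o(\ell_N)$, and $E^{\rm mH}\le E^{\rm H}$. So your claim that choosing $\rho=|v_N|^2$ recovers $E^{\rm H}$ is unjustified. You need the extra step that $\ell_N^{-1}E^{\rm mH}_{a_N,b_N,N}[1]$ has the same limit as $\ell_N^{-1}E^{\rm H}_{a_N,b_N,N}[1]$. The paper obtains this from \eqref{cv:h-nls-3-body-modified} and \eqref{cv-rate:h-nls-3-body-modified} by rerunning the lower-bound part of \Cref{thm:behaviors-h} for $\mathcal{E}^{\rm mH}$. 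Only then does combining \Cref{thm:qm-h} with \Cref{thm:behaviors-h} give the theorem.
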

\medskip

Although \Cref{thm:qm} establishes the convergence of the quantum ground-state energy, it does not assert the condensation of many-body ground states. Indeed, due to translation invariance and the linearity of the $N$-body system, the Hamiltonian \eqref{hamiltonian} fails to admit a true ground state on the unbounded domain. This phenomenon also observed in related unconfined systems (see, e.g., \cite{LieYau-87,DinNguRou-24}). While one may consider sequence-based ``approximate" ground states, BEC is generally not expected because these states naturally form spatial superpositions within the many-body framework. Following the approach suggested in \cite{LieYau-87}, one could instead confine the translation-invariant system to a bounded box matching the scale of the limiting profile, which aligns naturally with the Thomas--Fermi regime \eqref{functional:tf}. In that setting, a Feynman--Hellmann argument could in principle yield mass concentration for suitably localized approximate states. However, this strategy hinges on establishing direct strong convergence for approximate TF states, which remains non-trivial due to the local structure of the TF energy functional. Nevertheless, despite the absence of spatial condensation in the unconfined setting, the quantum ground-state energy per particle reliably converges to the corresponding effective mean-field limit.


\section{Proof of main results}

The asymptotic behavior of the quantum energy $E_{a,b,N}^{\rm QM}$ is obtained by comparing with the NLS energy via the intermediate Hartree energy $E^{\rm H}_{a,b,N}[1]$ defined in \eqref{energy:h}. While the upper bound can be readily obtained by employing trial states $v^{\otimes N}$, the estimation of the energy lower bound presents a technical obstacle. In the case of a trapping potential, this have been done by the quantum de Finetti method (see \cite{NguRic-25}). However, such a method is not applied to our homogeneous model, and we employ the classical Onsager lemma in order to obtain a ``weak'' result. The method of proof comprises three key components:
\begin{enumerate}
\item First, the Hoffmann--Ostenhof inequality \cite{Hof-77} provides a direct estimate of the many-body kinetic energy from below by that of the one-body one.

\item Second, the Onsager’s lemma \cite{Onsager-39} allows for the reduction of the two-body interaction potential of many particles to a one-body term, assuming $U^{\rm 2B}$ is of positive type. To address attractive interactions, it is necessary to combine Onsager’s lemma with a technique developed by Lévy--Leblond \cite{LevLeb-69} (see \cite[Section 3]{Lewin-ICMP} for a comprehensive discussion). The error term in this process is $CaN^{-1}\|U^{\rm 2B}_{N^{\alpha}}\|_{L^{\infty}} = CaN^{\alpha-1}\|U^{\rm 2B}\|_{L^{\infty}}$, where $U^{\rm 2B} \in L^{\infty}(\mathbb R)$ under the assumption $\widehat{U^{\rm 2B}} \in L^{1}(\mathbb R)$.

\item Finally, a modified version of Onsager’s lemma \cite{Nguyen-26-MZ} is specifically tailored for the repulsive three-body interaction of many-particles, enabling its reduction to a modified three-body interaction term involving the square root profile in \eqref{functional:h-modified}. The error term in this process is $CbN^{-1}\|U^{\rm 3B}_{N^{\beta}}\|_{L^{\infty}}^{2} = CbN^{2\beta-1}\|U^{\rm 3B}\|_{L^{\infty}}^{2}$, where $U^{\rm 3B} \in L^{\infty}(\mathbb R)$ since $\widehat{U^{\rm 3B}} \in L^{1}(\mathbb R)$.
\end{enumerate}

In summary, we have the following 1D result analogously to the 2D problem \cite{Nguyen-26-MZ}.

\begin{theorem}\label{thm:qm-h}
Under the assumptions \eqref{scaled-interaction}, \eqref{condition:2-3-body} and assuming further that $\widehat{U^{\rm 3B}} \geq 0$, we have
\begin{equation}\label{cv:energy-qm-hartree}
E^{\rm H}_{a,b,N}[1] \geq E^{\rm QM}_{a,b,N} \geq E^{\rm mH}_{a,b,N} - C\left(aN^{\alpha-1}+bN^{2\beta-1}\right),
\end{equation}
for any $a,b>0$. Here $E^{\rm mH}_{a,b,N}$ is the modified Hartree energy, given by
\begin{equation}\label{energy:h-modified}
E_{a,b,N}^{\rm mH}[1] = \inf\left\{\mathcal{E}_{a,b,N}^{\rm mH}[v] : v \in H^{1}(\mathbb R), \int_{\mathbb R}|v|^{2}=1\right\}
\end{equation}
and the modified Hartree functional $\mathcal{E}_{a,b,N}^{\rm mH}$ is defined as
\begin{align}\label{functional:h-modified}
\mathcal{E}_{a,b,N}^{\rm mH}[v] := & \int_{\mathbb R} |v'(x)|^{2}{\rm d}x - \frac{a}{2}\iint_{\mathbb R^{4}} U^{\rm 2B}_{N^{\alpha}}(x-y)|v(x)|^{2}|v(y)|^{2}{\rm d}x{\rm d}y \nonumber \\
& + \frac{b}{6}\iint_{\mathbb R^{2}}|v(x)|^{2}\sqrt{U^{\rm 3B}_{N^{\beta}}*|v|^{2}}(x)U^{\rm 3B}_{N^{\beta}}(x-y)|v(y)|^{2}\sqrt{U^{\rm 3B}_{N^{\beta}}*|v|^{2}}(y){\rm d}x{\rm d}y.
\end{align}
\end{theorem}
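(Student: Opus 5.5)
The upper bound $E^{\rm H}_{a,b,N}[1]\geq E^{\rm QM}_{a,b,N}$ follows by taking the trial state $\Psi_N=v^{\otimes N}$ with $\|v\|_{L^2}=1$. A direct computation gives $\langle v^{\otimes N},H_{a,b,N}v^{\otimes N}\rangle/N=\mathcal E^{\rm H}_{a,b,N}[v]$. The two-body sum has $\binom N2$ terms, which together with the prefactor $a/(N-1)$ produce $a/2$ after dividing by $N$. The three-body sum has $\binom N3$ terms, which together with $b/((N-1)(N-2))$ produce $b/6$.

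For the lower bound I take an arbitrary normalized $\Psi_N$ with one-body density $\rho_\Psi$, normalized so that $\int\rho_\Psi=N$, and set $v=\sqrt{\rho_\Psi/N}$.

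\textbf{Kinetic term.} The Hoffmann--Ostenhof inequality gives $\langle\Psi_N,\sum_i-\Delta_{x_i}\Psi_N\rangle\geq N\int|v'|^2$.

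\textbf{Two-body term.} Since $U^{\rm 2B}\geq0$ is not assumed to be of positive type, I use the Lévy-Leblond trick as in \cite[Section 3]{Lewin-ICMP}. Write $U^{\rm 2B}_{N^\alpha}=W_+-W_-$ with $\widehat{W_\pm}=(\widehat{U^{\rm 2B}_{N^\alpha}})_\pm$, or more simply bound $-U^{\rm 2B}_{N^\alpha}\geq -U^{\rm 2B}_{N^\alpha}*\ldots$ via the decomposition of $\widehat{U}$. Apply Onsager's lemma with a smeared density $\sum_i\delta_{x_i}$ replaced by $\sum_i\delta_{x_i}-\rho$, and optimize over $\rho$ after completing the square. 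The outcome is
$$-\sum_{i<j}U^{\rm 2B}_{N^\alpha}(x_i-x_j)\geq -\tfrac12 D(\rho_\Psi,\rho_\Psi)-\tfrac N2\|\widehat{U^{\rm 2B}_{N^\alpha}}\|_{L^1}$$
in expectation, where $D$ denotes the quadratic form of $U^{\rm 2B}_{N^\alpha}$. Concretely, the positive-type part is handled by Onsager's lemma and the error is controlled by $N\|U_{N^\alpha}^{\rm 2B}\|_\infty\lesssim N^{1+\alpha}$, using $\|\widehat U\|_{L^1}<\infty$. After multiplying by $a/(N(N-1))$, the main term becomes $-\tfrac a2\iint U^{\rm 2B}_{N^\alpha}|v|^2|v|^2$ up to a factor $N/(N-1)$, and the error becomes $CaN^{\alpha-1}$. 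The $N/(N-1)$ discrepancy is absorbed, since $|\iint U|v|^2|v|^2|\le\|U_{N^\alpha}\|_\infty$ contributes another $O(aN^{\alpha-1})$.

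\textbf{Three-body term.} This is the main obstacle, and here I follow the modified Onsager lemma of \cite{Nguyen-26-MZ}. Write the three-body interaction as
$$\sum_{i<j<k}U(x_i-x_j)U(x_i-x_k)=\tfrac12\sum_i\sum_{j\neq k,\,j,k\neq i}U(x_i-x_j)U(x_i-x_k),$$
which equals $\tfrac12\sum_i\bigl(F_i^2-\sum_{j\ne i}U(x_i-x_j)^2\bigr)$ with $F_i=\sum_{j\neq i}U(x_i-x_j)$, where $U=U^{\rm 3B}_{N^\beta}$. The key pointwise inequality is $F_i^2\geq 2F_i\, g(x_i)-g(x_i)^2$ for any function $g$. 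I will choose $g=\sqrt{U*\rho}\,\cdot h$, or more precisely use $F_i\sqrt{U*\rho}(x_i)$ and then Onsager's lemma on the positive-type kernel $U$ (recall $\widehat{U^{\rm 3B}}\ge0$) applied with weights $\sqrt{U*\rho}$ at the particle positions. This yields a lower bound of the form
$$2\iint\sigma(x)U(x-y)\sigma(y)-\iint\ldots,$$
with $\sigma=\rho\sqrt{U*\rho}$. The goal is to match exactly $\tfrac{b}{6}\iint|v|^2\sqrt{U*|v|^2}\,U\,|v|^2\sqrt{U*|v|^2}$ after choosing the trial $\rho=\rho_\Psi$ and completing the square. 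The diagonal self-interaction terms are bounded by $N\|U\|_\infty\cdot\sup_x U*\rho\lesssim N^2\|U\|_\infty^2$, and they have a favourable sign anyway. The term $-\sum_{j}U^2$ is bounded by $N^2\|U\|_\infty^2$. After multiplying by $b/(N(N-1)(N-2))$, these give $CbN^{2\beta-1}\|U^{\rm 3B}\|_\infty^2$.

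The delicate point is choosing the weights so that the cross term reproduces exactly the modified functional with constant $b/6$. I would first try a Cauchy--Schwarz/completing-square argument in the variables $\sum_j\sqrt{U*\rho}(x_j)\delta_{x_j}$ versus $\rho\sqrt{U*\rho}$. Collecting the three bounds gives $\langle\Psi_N,H\Psi_N\rangle/N\geq\mathcal E^{\rm mH}_{a,b,N}[v]-C(aN^{\alpha-1}+bN^{2\beta-1})$, and taking the infimum over $\Psi_N$ concludes the proof.
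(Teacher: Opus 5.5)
Your upper bound and the Hoffmann--Ostenhof step are fine. The two-body step, however, rests on a false inequality, and with it the whole per-state strategy collapses: you cannot bound $\langle\Psi,H_{a,b,N}\Psi\rangle/N$ from below by $\mathcal{E}^{\rm mH}_{a,b,N}[\sqrt{\rho_\Psi/N}]$ minus errors.

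Onsager's lemma bounds a \emph{repulsive} positive-type interaction from \emph{below}. For $\widehat{W}\geq 0$ it gives $\langle\sum_{i<j}W(x_i-x_j)\rangle_\Psi\geq\frac12 D_W(\rho_\Psi,\rho_\Psi)-\frac N2 W(0)$, which is just $\langle|\hat\mu(k)|^2\rangle_\Psi\geq|\langle\hat\mu(k)\rangle_\Psi|^2$ for $\mu=\sum_i\delta_{x_i}$. No completion of squares reverses it. In your splitting, the part $W_+$ with $\widehat{W_+}=(\widehat{U^{\rm 2B}_{N^\alpha}})_+$ is nonzero (since $\widehat{U^{\rm 2B}}(0)=1$) and enters $H_{a,b,N}$ with a minus sign. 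Your claimed bound $-\langle\sum_{i<j}U^{\rm 2B}_{N^\alpha}(x_i-x_j)\rangle_\Psi\geq-\frac12 D(\rho_\Psi,\rho_\Psi)-\frac N2\|\widehat{U^{\rm 2B}_{N^\alpha}}\|_{L^1}$ would therefore need the reverse of Onsager's inequality for $W_+$. That fails whenever the particles are tightly correlated while $\rho_\Psi$ is spread out, and translation invariance always permits such states.

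Here is a concrete counterexample. Take a symmetric, compactly supported $\Phi$ with $\langle\Phi,H_{a,b,N}\Phi\rangle/N\leq E^{\rm QM}_{a,b,N}+\varepsilon$, and set $\Psi=M^{-1/2}\sum_{m=1}^{M}\Phi(x_1-mL,\dots,x_N-mL)$ with $L$ huge.
\begin{itemize}
\item The translates have disjoint supports, so $\Psi$ is normalized and has the same energy as $\Phi$.
\item $\rho_\Psi$ is the average of $M$ far-apart copies of $\rho_\Phi$. For $v=\sqrt{\rho_\Psi/N}$ this gives $\iint U^{\rm 2B}_{N^\alpha}(x-y)|v(x)|^2|v(y)|^2\leq M^{-1}\|U^{\rm 2B}_{N^\alpha}\|_{L^\infty}+o_{L\to\infty}(1)$.
\item Since the kinetic and three-body terms are nonnegative, $\mathcal{E}^{\rm mH}_{a,b,N}[v]\geq -\frac{a}{2M}\|U^{\rm 2B}_{N^\alpha}\|_{L^\infty}-o_{L\to\infty}(1)$.
\end{itemize}
A per-state bound of your form would then force $E^{\rm QM}_{a,b,N}\geq -C(aN^{\alpha-1}+bN^{2\beta-1})$. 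This contradicts $E^{\rm QM}_{a,b,N}\leq E^{\rm H}_{a,b,N}[1]\to E^{\rm NLS}_{a,b}[1]<0$ whenever the error term is small.

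This is exactly where the paper combines Onsager's lemma with L\'evy--Leblond's technique (see \cite[Section 3]{Lewin-ICMP}) to handle the attractive sign. It is also why the paper's conclusion concerns only the infimum $E^{\rm mH}_{a,b,N}$, not $\mathcal{E}^{\rm mH}_{a,b,N}$ evaluated at $\sqrt{\rho_\Psi/N}$. You name that technique, but the argument you actually write (Onsager plus completing the square around $\rho_\Psi$) cannot supply it.

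There are also two smaller problems in the three-body part. Write $U:=U^{\rm 3B}_{N^\beta}$ and $U_{ij}:=U(x_i-x_j)$.
\begin{itemize}
\item Your identity $\sum_{i<j<k}U_{ij}U_{ik}=\frac12\sum_i\sum_{j\neq k}U_{ij}U_{ik}$ is off by a factor $3$, since the right side has $3\binom{N}{3}$ terms. For symmetric $\Psi$ one only has $\langle\sum_{i<j<k}U_{ij}U_{ik}\rangle_\Psi=\frac16\langle\sum_i\sum_{j\neq k}U_{ij}U_{ik}\rangle_\Psi$.
\item The tangent-line bound $F_i^2\geq 2F_i g(x_i)-g(x_i)^2$ with the natural weight $g=U*\rho_\Psi$, followed by symmetrization and Onsager, produces $2A-B$ up to diagonal errors. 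Here $A=\iint\rho_\Psi(x)\sqrt{U*\rho_\Psi}(x)U(x-y)\rho_\Psi(y)\sqrt{U*\rho_\Psi}(y)\,{\rm d}x{\rm d}y$ and $B=\int\rho_\Psi(U*\rho_\Psi)^2$. One has $B\geq A$, by $s(x)s(y)\leq\frac12(s(x)^2+s(y)^2)$ with $s=\sqrt{U*\rho_\Psi}$. So $2A-B\leq A$, and the sketch does not land on $\mathcal{E}^{\rm mH}_{a,b,N}$.
\end{itemize}
Reaching exactly the modified functional is the content of the lemma in \cite{Nguyen-26-MZ}. You would have to import it, as the paper does, rather than rederive it along these lines.
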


By \Cref{lem:cv-hartree-potentials}, the inequality $E^{\rm H}_{a,b,N}[1] \geq E^{\rm mH}_{a,b,N}[1]$ follows immediately. Advantageously, these two energy levels are asymptotically equivalent, as their deviations from the local nonlinear cubic limit share identical leading-order behavior. Consequently, the existence and asymptotic properties of the modified Hartree ground states mirror those of the original functional presented in \Cref{thm:existence-h,thm:behaviors-h}. Combining \Cref{thm:behaviors-h,thm:qm-h} directly yields Theorem \ref{thm:qm}. Note that the hypothesis a $a_{N}N^{\alpha-1} + b_{N}N^{2\beta-1} \ll \ell_{N}$ ensures that the error terms in \eqref{cv:energy-qm-hartree} remain negligible relative to the leading-order energy scale.  

The remainder of this section is dedicated to proving our core Hartree results (\Cref{thm:existence-h,thm:behaviors-h}). Specifically, we analyze the homogeneous Hartree framework \eqref{energy:h} to establish the existence of ground states and characterize their limiting behavior across various coupling regimes. To this end, we first derive crucial preliminary estimates governing the convergence (and convergence rates) of the two- and three-body interaction potentials, extending previous methodologies developed for two-dimensional homogeneous Bose gases.

\begin{lemma}\label{lem:cv-hartree-potentials}
Let $\alpha,\beta>0$ and assume that $U^{\rm 2B}_{N^{\alpha}}$, $U^{\rm 3B}_{N^{\beta}}$ satisfy \eqref{scaled-interaction}-\eqref{condition:2-3-body}. Then, for every $v\in H^{1}(\mathbb R)$,
\begin{align}
\int_{\mathbb R}|v(x)|^4{\rm d}x & = \lim_{N\to+\infty}\iint_{\mathbb R^{4}} U^{\rm 2B}_{N^{\alpha}}(x-y)|v(x)|^{2}|v(y)|^{2}{\rm d}x{\rm d}y, \label{cv:h-nls-2-body} \\
\int_{\mathbb R}|v(x)|^{6}{\rm d}x & = \lim_{N\to+\infty}\iiint_{\mathbb R^{6}} U^{\rm 3B}_{N^{\beta}}(x-y)U^{\rm 3B}_{N^{\beta}}(x-z)|v(x)|^{2}|v(y)|^{2}|v(z)|^{2}{\rm d}x{\rm d}y{\rm d}z \label{cv:h-nls-3-body} \\
& = \lim_{N\to+\infty}\iint_{\mathbb R^{4}}|v(x)|^{2}\sqrt{U^{\rm 3B}_{N^{\beta}}*|v|^{2}}(x)U^{\rm 3B}_{N^{\beta}}(x-y)|v(y)|^{2}\sqrt{U^{\rm 3B}_{N^{\beta}}*|v|^{2}}(y){\rm d}x{\rm d}y. \label{cv:h-nls-3-body-modified}
\end{align}
Assume further that $xU^{\rm 2B}(x),xU^{\rm 3B}(x) \in L^{1}(\mathbb R)$ then
\begin{align}
0 \leq{}& \int_{\mathbb R}|v(x)|^4{\rm d}x - \iint_{\mathbb R^{4}} U^{\rm 2B}_{N^{\alpha}}(x-y)|v(x)|^{2}|v(y)|^{2}{\rm d}x{\rm d}y \label{cv-rate:h-nls-2-body-0} \\
\leq{}& 2N^{-\alpha} \|v\|_{L^{2}}^{2} \|v'\|_{L^{2}}^{2} \int_{\mathbb R} |xU^{\rm 2B}(x)| {\rm d}x, \label{cv-rate:h-nls-2-body} \\
0 \leq{}& \int_{\mathbb R}|v(x)|^{6} {\rm d}x - \iiint_{\mathbb R^{6}} U^{\rm 3B}_{N^{\beta}}(x-y)U^{\rm 3B}_{N^{\beta}}(x-z) |v(x)|^{2} |v(y)|^{2} |v(z)|^{2} {\rm d}x {\rm d}y {\rm d}z \label{cv-rate:h-nls-3-body-0} \\
\leq{}& 4N^{-\beta} \|v\|_{L^{2}}^{3} \|v'\|_{L^{2}}^{3} \int_{\mathbb R} |xU^{\rm 3B}(x)| {\rm d}x, \label{cv-rate:h-nls-3-body} \\
0 \leq{}& \int_{\mathbb R}|v(x)|^{6} {\rm d}x - \iint_{\mathbb R^{4}}|v(x)|^{2}\sqrt{U^{\rm 3B}_{N^{\beta}}*|v|^{2}}(x)U^{\rm 3B}_{N^{\beta}}(x-y)|v(y)|^{2}\sqrt{U^{\rm 3B}_{N^{\beta}}*|v|^{2}}(y){\rm d}x{\rm d}y \label{cv-rate:h-nls-3-body-modified-0} \\
\leq{}& 5N^{-\beta} \|v\|_{L^{2}}^{3} \|v'\|_{L^{2}}^{3} \int_{\mathbb R} |xU^{\rm 3B}(x)| {\rm d}x. \label{cv-rate:h-nls-3-body-modified}
\end{align}
\end{lemma}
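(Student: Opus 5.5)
The plan is to write everything in terms of the density $\rho:=|v|^{2}$ and to use only four elementary facts. First, since $U^{\rm kB}\ge 0$ is even with $\int U^{\rm kB}=1$, we have $(U^{\rm kB}_{N^\gamma}*\rho)(x)=\int_{\mathbb R}U^{\rm kB}(z)\rho(x-N^{-\gamma}z)\,{\rm d}z$, which is an average of translates of $\rho$. Second, in one dimension $\|\rho\|_{L^\infty}\le \|v\|_{L^2}\|v'\|_{L^2}$. Third, $\|\rho'\|_{L^1}\le 2\|v\|_{L^2}\|v'\|_{L^2}$. Fourth, writing $h=N^{-\gamma}z$ and $\tau_h\rho:=\rho(\cdot-h)$, we have $\|\rho-\tau_h\rho\|_{L^1}\le |h|\,\|\rho'\|_{L^1}$.

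\textbf{Two-body term.} The left-hand side of \eqref{cv:h-nls-2-body} equals $\int\rho\,(U^{\rm 2B}_{N^\alpha}*\rho)$.
\begin{itemize}
\item \emph{Nonnegativity} \eqref{cv-rate:h-nls-2-body-0}. By Plancherel the difference is $\int|\hat\rho|^{2}\bigl(1-\widehat{U^{\rm 2B}}(\xi/N^{\alpha})\bigr)\,{\rm d}\xi$. This is $\ge0$ because $|\widehat{U^{\rm 2B}}|\le \widehat{U^{\rm 2B}}(0)=1$, which holds since $U^{\rm 2B}\ge0$ has mass one.
\item \emph{Qualitative limit} \eqref{cv:h-nls-2-body}. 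The difference is $\iint U^{\rm 2B}(z)\rho(x)[\rho(x)-\rho(x-N^{-\alpha}z)]\,{\rm d}x\,{\rm d}z$. It is bounded by $\|\rho\|_{L^2}\int U^{\rm 2B}(z)\|\rho-\tau_{N^{-\alpha}z}\rho\|_{L^2}\,{\rm d}z$. This tends to $0$ by continuity of translations in $L^2$ together with dominated convergence, with dominating function $2\|\rho\|_{L^2}U^{\rm 2B}$.
\item \emph{Rate} \eqref{cv-rate:h-nls-2-body}. Estimate instead $\int\rho|\rho-\tau_h\rho|\le\|\rho\|_{L^\infty}|h|\|\rho'\|_{L^1}$. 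The facts above then give exactly $2N^{-\alpha}\|v\|_{L^2}^2\|v'\|_{L^2}^2\int|xU^{\rm 2B}|$.
\end{itemize}

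\textbf{Three-body term.} The triple integral in \eqref{cv:h-nls-3-body} equals $\int\rho\, g^{2}$ with $g:=U^{\rm 3B}_{N^\beta}*\rho$.
\begin{itemize}
\item \emph{Nonnegativity} \eqref{cv-rate:h-nls-3-body-0}. Hölder gives $\int\rho g^2\le\|\rho\|_{L^3}\|g\|_{L^3}^{2}$, and Young gives $\|g\|_{L^3}\le\|\rho\|_{L^3}$. Hence $\int\rho g^2\le\int\rho^3$.
\item \emph{Limit and rate.} Factor the difference as $\int\rho(\rho-g)(\rho+g)$ and use $0\le\rho+g\le2\|\rho\|_{L^\infty}$. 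This reduces the problem to $\int\rho|\rho-g|$, which is exactly the two-body quantity. We get qualitative convergence, and the rate $2\|\rho\|_{L^\infty}\cdot\|\rho\|_{L^\infty}N^{-\beta}\|\rho'\|_{L^1}\int|xU^{\rm 3B}|$. This yields the constant $4$ in \eqref{cv-rate:h-nls-3-body}.
\end{itemize}

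\textbf{Modified three-body term.} Set $f:=\rho\sqrt g$, so that the quantity in \eqref{cv:h-nls-3-body-modified} is $\int f\,(U^{\rm 3B}_{N^\beta}*f)$. I split
\[
\int\rho^3-\int f\,U^{\rm 3B}_{N^\beta}*f=\Bigl(\int\rho^3-\int\rho^2g\Bigr)+\Bigl(\int f^2-\int f\,U^{\rm 3B}_{N^\beta}*f\Bigr).
\]
\begin{itemize}
\item \emph{Nonnegativity.} For the first bracket, Hölder and Young give $\int\rho^2 g\le\|\rho\|_{L^3}^2\|g\|_{L^3}\le\int\rho^3$. For the second, use either Cauchy--Schwarz with Young, or the identity below.
\item \emph{First bracket: limit and rate.} It is handled exactly as before. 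Its rate is $\|\rho\|_{L^\infty}^2N^{-\beta}\|\rho'\|_{L^1}\int|xU^{\rm 3B}|$, contributing $2$ to the constant.
\item \emph{Second bracket, symmetrized.} Since $U^{\rm 3B}$ is even with mass one,
\[
\int f^2-\int f\,U*f=\tfrac12\iint U(z)|f(x)-f(x-h)|^2\,{\rm d}x\,{\rm d}z .
\]
\item \emph{Second bracket, pointwise bound.} Write $f(x)-f(y)=(\rho(x)-\rho(y))\sqrt{g(x)}+\rho(y)\bigl(\sqrt{g(x)}-\sqrt{g(y)}\bigr)$ and use $|\sqrt a-\sqrt b|^2\le|a-b|$. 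This gives $|f(x)-f(y)|^2\le 2|\rho(x)-\rho(y)|^2g(x)+2\rho(y)^2|g(x)-g(y)|$. Every term is then $\|\rho\|_{L^\infty}^2$ times a first difference of $\rho$ or of $g$.
\item \emph{Second bracket, rate.} Since $g$ is an average of translates of $\rho$, $\|g-\tau_hg\|_{L^1}\le|h|\|\rho'\|_{L^1}$. Hence this bracket is $O(N^{-\beta})\,\|v\|_{L^2}^3\|v'\|_{L^2}^3\int|xU^{\rm 3B}|$.
\item \emph{Second bracket, qualitative limit.} The same decomposition together with $L^1$-continuity of translations and dominated convergence gives $\to0$.
\end{itemize}

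\textbf{Main obstacle.} The delicate step is the second bracket for the modified term. The square root $\sqrt g$ is not Lipschitz near $\{g=0\}$, so one cannot simply differentiate $f$. The elementary inequality $|\sqrt a-\sqrt b|^2\le|a-b|$ is the device I would rely on to avoid this. The remaining work is to track the numerical constants carefully enough to land on the stated factor $5$, using the cruder bound $|\rho(x)-\rho(y)|^2\le 2\|\rho\|_{L^\infty}|\rho(x)-\rho(y)|$ where needed. If the constant comes out slightly worse, only the numerical factor in \eqref{cv-rate:h-nls-3-body-modified} is affected, not the $N^{-\beta}$ rate.
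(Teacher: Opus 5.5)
\paragraph{Review of the proposal.} Your argument is sound up to the last estimate. That covers the three limits, every nonnegativity claim, and the rates \eqref{cv-rate:h-nls-2-body} and \eqref{cv-rate:h-nls-3-body}. Their constants $2$ and $4$ come out exactly from $\|\rho\|_{L^\infty}\le\|v\|_{L^2}\|v'\|_{L^2}$ and $\|\rho'\|_{L^1}\le 2\|v\|_{L^2}\|v'\|_{L^2}$. This is what the paper's one-line proof (a pointer to the two-dimensional lemma plus \eqref{ineq:sobolev}) indicates.

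\paragraph{The gap.} The problem is the constant in \eqref{cv-rate:h-nls-3-body-modified}: your split through $\int f^2=\int\rho^2 g$ does not give $5$.
\begin{itemize}
\item Your first bracket costs $2$.
\item In the second bracket, the bound $|f(x)-f(y)|^2\le 2|\rho(x)-\rho(y)|^2g(x)+2\rho(y)^2|g(x)-g(y)|$ costs $4$, even with the sharp bound $|\rho(x)-\rho(y)|\le\|\rho\|_{L^\infty}$ (valid since $\rho\ge 0$).
\item The ``cruder'' factor $2\|\rho\|_{L^\infty}$ you propose only makes it worse, raising the second bracket to $6$.
\end{itemize}
So as written you prove the estimate with constant $6$ (or $8$), not $5$. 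Your closing remark that only the numerical factor is affected concedes that the stated inequality is not established.

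\paragraph{Fix via the unmodified three-body term.} Pass through $\int\rho g^2$ instead of $\int\rho^2 g$. Set $g=U^{\rm 3B}_{N^\beta}*\rho$ and $f=\rho\sqrt g$. Since $U^{\rm 3B}$ is even, symmetrizing in $x\leftrightarrow y$ gives
\[
\int\rho\, g^2-\int f\,\bigl(U^{\rm 3B}_{N^\beta}*f\bigr)=\frac12\iint\rho(x)\rho(y)\,U^{\rm 3B}_{N^\beta}(x-y)\bigl(\sqrt{g(x)}-\sqrt{g(y)}\bigr)^2\,{\rm d}x\,{\rm d}y .
\]
This is nonnegative. Using $|\sqrt a-\sqrt b|^2\le|a-b|$ and $\|g-g(\cdot-w)\|_{L^1}\le|w|\,\|\rho'\|_{L^1}$, it is at most
\[
\tfrac12\|\rho\|_{L^\infty}^2\|\rho'\|_{L^1}N^{-\beta}\int|xU^{\rm 3B}|\le N^{-\beta}\|v\|_{L^2}^3\|v'\|_{L^2}^3\int|xU^{\rm 3B}| .
\]
Adding your bound $4$ for $\int\rho^3-\int\rho g^2$ gives exactly $5$. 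As a by-product you also get $\int\rho g^2\ge\int f\,(U^{\rm 3B}_{N^\beta}*f)$. This comparison is what the paper uses right after \Cref{thm:qm-h} to get $E^{\rm H}_{a,b,N}[1]\ge E^{\rm mH}_{a,b,N}[1]$; the lemma's two separate comparisons with $\int\rho^3$ do not give it.

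\paragraph{Alternative fix keeping your split.} Symmetrize the first bracket instead. With $a=\rho(x)$ and $b=\rho(x-N^{-\beta}z)$,
\[
\int\rho^3-\int\rho^2g=\tfrac12\iint U^{\rm 3B}(z)(a-b)^2(a+b)\,{\rm d}x\,{\rm d}z .
\]
Since $(a-b)^2(a+b)=|a-b|\,|a^2-b^2|\le\|\rho\|_{L^\infty}^2|a-b|$, the first bracket costs only $1$. Combined with $4$ for the second bracket, this gives $1+4=5$.
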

\begin{proof}
The detailed proof of \Cref{lem:cv-hartree-potentials} closely follows \cite[Lemma 2.1]{Nguyen-26-MZ}. The minor discrepancies in the explicit error bounds \eqref{cv-rate:h-nls-2-body-0}, \eqref{cv-rate:h-nls-3-body-0}, and \eqref{cv-rate:h-nls-3-body-modified} stem directly from the one-dimensional Sobolev embedding, i.e.,
\begin{equation}\label{ineq:sobolev}
\|u\|_{L^{\infty}}^{2} \leq \|u\|_{L^{2}}\|u’\|_{L^{2}}.
\end{equation}
\end{proof}

With \Cref{lem:cv-hartree-potentials} in hand, we are now able to prove \Cref{thm:existence-h,thm:behaviors-h} on the existence and behaviors of Hartree ground states.

\begin{proof}[Proof of \Cref{thm:existence-h}]
In contrast to the NLS setting, where ground states exhibit explicit radial symmetry and monotonicity, Hartree ground states lack these structural properties. Consequently, existence is proved via Lions' concentration compactness method \cite{Lions-84a} combined with known variational properties of the NLS framework.

First, we show that the Hartree ground-state energy inherits the negativity of the NLS ground-state energy. Indeed, by the variational principle, \cite[Lemma 7]{LewNamRou-17}, and \eqref{cv-rate:h-nls-2-body-0}, evaluating the functional on the trial state $\ell^{\frac{1}{2}} \phi(\ell\cdot)$ with $\phi \in C_{c}^{\infty}(\mathbb R)$ yields
\begin{align*}
E^{\rm H}_{a,b,N}[1] & \leq \mathcal{E}^{\rm H}_{a,b,N}\left[\ell^{\frac{1}{2}} \phi(\ell\cdot)\right] \\ 
& \leq \ell^{2}\|\phi'\|_{L^{2}}^{2} - \ell a\|\phi\|_{L^{4}}^{4} \left(\frac{1}{2} - \int_{|x|\geq L}U^{\rm 2B}(x){\rm d}x\right) + \ell^{2}aN^{-\alpha} L \|\phi\|_{L^{2}}^{2} \|\phi'\|_{L^{2}}^{2} + \ell^{2}\frac{b}{6}\|\phi\|_{L^{6}}^{6}.
\end{align*}
where $\ell,L>0$ are parameters to be determined. 
Fixing $L>0$ sufficiently large such that
$$
\int_{|x|\geq L}U^{\rm 2B}(x){\rm d}x  < \frac{1}{2}.
$$
and choosing $\ell>0$ sufficiently small, we deduce that for any $a>0$, $b \geq 0$, $\alpha,\beta>0$ and $N>0$,
\begin{equation}\label{energy:hartree-negativity}
E^{\rm H}_{a,b,N}[1] < 0
\end{equation}
provided that $\ell>0$ was chosen small enough.

Next, for fixed parameters $a>0$, $b \geq 0$, and $N>0$, let $\{v_{k}\}$ be a minimizing sequence of \eqref{energy:h}, i.e., $\|v_{k}\|_{L^{2}}^{2} = 1$ and
$$
E^{\rm H}_{a,b,N}[1] = \lim_{k\to+\infty}\mathcal{E}_{a,b,N}^{\rm H}[v_{k}].
$$
Since the attractive two-body interaction is mass-subcritical relative to the kinetic energy, the uniform boundedness of $\{v_{k}\} \in H^{1}(\mathbb R)$ follows directly from estimate \eqref{cv-rate:h-nls-2-body-0}, the non-negativity of the three-body interaction, and the Sobolev embedding \eqref{ineq:sobolev}. Up to translations and subsequence extraction, $v_{k} \to v_{0}$ weakly in $H^{1}(\mathbb R)$ and pointwise almost everywhere in $\mathbb R$. To complete the proof, it suffices to demonstrate that this weak limit is strong in $H^{1}(\mathbb R)$.

We first rule out the vanishing case $v_{0} \equiv 0$ using \eqref{energy:hartree-negativity}. If $v_{0} \not\equiv 0$ then $v_{k} \to 0$ in $L^{p}(\mathbb R)$, for all $p \in (2,+\infty)$. Combining this with \eqref{cv-rate:h-nls-2-body-0} and the non-negativity of the three-body interaction yields
$$
E^{\rm H}_{a,b,N}[1] = \lim_{k\to+\infty}\mathcal{E}_{a,b,N}^{\rm H}[v_{k}] \geq 0.
$$
which contradicts \eqref{energy:hartree-negativity}.

Now, assume $\|v_{0}\|_{L^{2}}^{2}=1$. By the Brezis--Lieb lemma \cite{BreLie-83}, $v_{k} \to v_{0}$ strongly in $L^{2}(\mathbb R)$, and by Sobolev embedding, this strong convergence holds in $L^{p}(\mathbb R)$ for all $2 \leq p<+\infty$. Applying Fatou's lemma then yields
$$
E^{\rm H}_{a,b,N}[1] = \lim_{k\to+\infty}\mathcal{E}_{a,b,N}^{\rm H}[v_{k}] \geq \mathcal{E}_{a,b,N}^{\rm H}[v_{0}] \geq E^{\rm H}_{a,b,N}[1]
$$
which proves that $v_{0}$ is a ground state of $E^{\rm H}_{a,b,N}[1]$.

Finally, we rule out dichotomy, i.e., $0 < \lambda := \|v_{0}\|_{L^{2}}^{2} < 1$. Supposing for contradiction that this holds. the uniform $H^{1}(\mathbb R)$ bound on $\{v_{k}\}$ together with \eqref{cv-rate:h-nls-2-body-0}, \eqref{cv-rate:h-nls-3-body}, and the Brezis--Lieb lemma \cite{BreLie-83} allows us to split the energy as follows
\begin{align}\label{energy:h-split-1}
E^{\rm H}_{a,b,N}[1] = \lim_{k\to+\infty}\mathcal{E}_{a,b,N}^{\rm H}[v_{k}] & \geq \lim_{k\to+\infty}\mathcal{E}_{a,b}^{\rm NLS}[v_{k}] - o(1)_{N\to+\infty} \nonumber \\
& = \mathcal{E}_{a,b}^{\rm NLS}[v_{0}] + \lim_{k\to+\infty}\mathcal{E}_{a,b}^{\rm NLS}[v_{k}-v_{0}] - o(1)_{N\to+\infty} \nonumber \\
& \geq E^{\rm NLS}_{a,b}[\lambda] + E^{\rm NLS}_{a,b}[1-\lambda] - o(1)_{N\to+\infty}.
\end{align}
On the other hand, letting $v_{0}$ be an NLS ground state corresponding to $E^{\rm NLS}_{a,b}[1]$ (which exists for all $a>0$, $b \geq 0$), the variational principle along with \eqref{cv-rate:h-nls-2-body} and \eqref{cv-rate:h-nls-3-body-0} gives
\begin{equation}\label{energy:h-split-2}
E^{\rm H}_{a,b,N}[1] \leq \mathcal{E}_{a,b,N}^{\rm H}[v_{0}] \leq \mathcal{E}_{a,b}^{\rm NLS}[v_{0}] + o(1)_{N\to+\infty} = E^{\rm NLS}_{a,b}[1] + o(1)_{N\to+\infty}.
\end{equation}
Combining \eqref{energy:h-split-1} and \eqref{energy:h-split-2}, we obtain
\begin{equation}\label{energy:h-split-N}
o(1)_{N\to+\infty} + E^{\rm NLS}_{a,b}[1] \geq E^{\rm NLS}_{a,b}[\lambda] + E^{\rm NLS}_{a,b}[1-\lambda].
\end{equation}
To show that inequality \eqref{energy:h-split-N} fails for sufficiently large $N$, recall that $E^{\rm NLS}_{a,b}[\lambda] < 0$ for all $\lambda>0$ and for every $a>0$, $b \geq 0$, with corresponding ground state $v_{\lambda}$. Defining the normalized profile $\widetilde{v_{\lambda}} = v_{\lambda}\big(\|v_{\lambda}\|_{L^{2}}^{2}\cdot\big)$ so that $\|\widetilde{v_{\lambda}}\|_{L^{2}}^{2} = 1$, we compute
\begin{align}\label{energy:h-split-lambda}
E^{\rm NLS}_{a,b}[\lambda] = \mathcal{E}^{\rm NLS}_{a,b}[v_{\lambda}] & = \int_{\mathbb{R}}\left[\|v_{\lambda}\|_{L^{2}}^{-2}|\widetilde{v_{\lambda}}'|^{2} - \frac{a}{2} \|v_{\lambda}\|_{L^{2}}^{2} |\widetilde{v_{\lambda}}|^{4} + \frac{b}{6}\|v_{\lambda}\|_{L^{2}}^{2} |\widetilde{v_{\lambda}}|^{6} \right] \nonumber  \\
& = \lambda\mathcal{E}^{\rm NLS}_{a,b}[\widetilde{v_{\lambda}}] + (\lambda^{-1}-\lambda)\|\widetilde{v_{\lambda}}'\|_{L^{2}}^{2} \nonumber \\
& \geq \lambda E^{\rm NLS}_{a,b}[1] + (1-\lambda^{2)}\|v'_{\lambda}\|_{L^{2}}^{2}.
\end{align}
By identical arguments, we have
\begin{equation}\label{energy:h-split-1-lambda}
E^{\rm NLS}_{a,b}[1-\lambda] \geq (1-\lambda)E^{\rm NLS}_{a,b}[1] + (1-(1-\lambda)^{2})\|v'_{1-\lambda}\|_{L^{2}}^{2}.
\end{equation}
Substituting \eqref{energy:h-split-lambda} and \eqref{energy:h-split-1-lambda} into \eqref{energy:h-split-N} then yields
$$
o(1)_{N\to+\infty} \geq (1-\lambda^{2})\|v'_{\lambda}\|_{L^{2}}^{2} + (1-(1-\lambda)^{2})\|v'_{1-\lambda}\|_{L^{2}}^{2},
$$
which is impossible for large $N$, as the right-hand side is strictly positive and independent of $N$. This eliminates dichotomy, completing the proof of existence.
\end{proof}

\begin{proof}[Proof of \Cref{thm:behaviors-h}]
We divide the proof into two parts corresponding to the cubic-quintic regime and the Thomas--Fermi regime.

Firstly, we establish the cubic(-quintic) approximation for the Hartree problem in the limit $b_{N} \to b_{0}$ with $b_{0} \not\equiv +\infty$. We first derive the upper bound in \eqref{cv:energy-h-cubic-quintic} via the variational principle. Let $v_{0} \in H^{1}(\mathbb R)$ be a ground state for $E^{\rm NLS}_{1,b_{0}}[1]$. Evaluating the Hartree functional on the trial state $a_{N}^{\frac{1}{2}}v_{0}\big(a_{N}\cdot\big)$ and applying bounds \eqref{cv-rate:h-nls-2-body} and \eqref{cv-rate:h-nls-3-body-0} yields
\begin{align*}
E^{\rm H}_{a_{N},b_{N},N}[1] & \leq \mathcal{E}^{\rm H}_{a_{N},b_{N},N}\left[a_{N}^{\frac{1}{2}}v_{0}\big(a_{N}\cdot\big)\right] \\
& \leq a_{N}^{2}\left(\mathcal{E}^{\rm NLS}_{1,b_{N}}[v_{0}] + \|v_{0}\|_{L^{4}}^{4}\int_{|x|\geq L}U^{\rm 2B}(x){\rm d}x + a_{N}N^{-\alpha} L \|v_{0}\|_{L^{2}}^{2} \|v'_{0}\|_{L^{2}}^{2}\right).
\end{align*}
Under the hypothesis $a{N}N^{-\alpha} \ll 1$, choosing $L = L_{N} := a_{N}^{-\frac{1}{2}}N^{\frac{\alpha}{2}} \gg 1$ ensures that the error terms are asymptotically vanishing as $N \to +\infty$, establishing the energy upper bound in \eqref{cv:energy-h-cubic-quintic}. To establish the matching lower bound in \eqref{cv:energy-h-cubic-quintic} and the strong state convergence \eqref{cv:gs-h-cubic-quintic}, we define the rescaled sequence
$$
w_{N} := a_{N}^{-\frac{1}{2}} v_{N} \big(a_{N}^{-1} \cdot \big).
$$
Then $\|w_{N}\|_{L^2} = \|v_{N}\|_{L^2} = 1$ and the Hartree energy functional rewrites as
\begin{align}\label{boundedness:h-scaled-1}
a_{N}^{-2}\mathcal{E}^{\rm H}_{a_{N},b_{N},N}[v_{N}]
= \int_{\mathbb R} \left[|w_{N}'|^{2} - \frac{1}{2}\int_{\mathbb R} |w_{N}|^{2}\left(|w_{N}|^{2}*U^{\rm 2B}_{N^{\alpha}a_{N}^{-1}}\right)+ \frac{b_{N}}{6} |w_{N}|^{2}(U^{\rm 3B}_{N^{\beta}a_{N}^{-1}}*|w_{N}|^{2})^{2}\right].
\end{align}
By the energy upper bound \eqref{cv:energy-h-cubic-quintic}, the left-hand side of \eqref{boundedness:h-scaled-1} is bounded from above uniformly in $N$. Since the attractive two-body potential is mass-subcritical relative to the kinetic energy, applying \eqref{cv-rate:h-nls-2-body-0}, Sobolev inequality \eqref{ineq:sobolev}, and the non-negativity of the three-body term proves that $\{w_{N}\}$ is uniformly bounded in $H^{1}(\mathbb R)$. . Consequently, up to translations and subsequence extraction, $w_{N} \to w_{0}$ weakly in $H^{1}(\mathbb R)$ and almost everywhere in $\mathbb R$. We claim that this convergence holds strongly in $H^{1}(\mathbb R)$. First, we exclude the vanishing case $w_{0} \equiv 0$. If $w_{0} \equiv 0$, Sobolev embedding implies $w_{N} \to 0$ strongly in $L^{p}(\mathbb R)$ for all $p \in (2,+\infty)$. Taking $N\to+\infty$ in \eqref{boundedness:h-scaled-1}, while using \eqref{cv-rate:h-nls-2-body-0} (rescaled by $N^{\alpha}a_{N}^{-1}$), the non-negativity of the three-body term, and upper bound \eqref{cv:energy-h-cubic-quintic} yields
$$
E_{1,b_{0}}^{\rm NLS}[1] \geq \lim_{N\to+\infty}a_{N}^{-2}\mathcal{E}^{\rm H}_{a_{N},b_{N},N}[v_{N}] \geq 0.
$$
This contradicts the negativity $E_{1,b_{0}}^{\rm NLS}[1] < 0$ for all $b_{0} \geq 0$ (see \Cref{thm:existence-nls}) hence $w_{0} \not\equiv 0$. 

Next, we rule out dichotomy, $0 < \lambda := \|w_{0}\|_{L^{2}}^{2} < 1$. Supposing this holds, we employ \eqref{cv-rate:h-nls-2-body-0}, \eqref{cv:h-nls-3-body}, the Brezis--Lieb lemma \cite{BreLie-83}, and the energy upper bound \eqref{cv:energy-h-cubic-quintic} to split the asymptotic energy as follows
\begin{align}\label{energy:h-split}
E_{1,b_{0}}^{\rm NLS}[1] \geq \lim_{N\to+\infty}a_{N}^{-2}\mathcal{E}^{\rm H}_{a_{N},b_{N},N}[v_{N}] & \geq \lim_{N\to+\infty}\mathcal{E}_{1,b_{0}}^{\rm NLS}[w_{N}] \nonumber \\
& = \mathcal{E}_{1,b_{0}}^{\rm NLS}[w_{0}] + \lim_{N\to+\infty}\mathcal{E}_{1,b_{0}}^{\rm NLS}[w_{N}-w_{0}] \nonumber \\
& \geq E_{1,b_{0}}^{\rm NLS}[\lambda] + E_{1,b_{0}}^{\rm NLS}[1-\lambda].
\end{align}
Here, the assumption $a_{N}N^{-\beta} \ll 1$ ensures that the error between the three-body Hartree potential and the quintic NLS nonlinearity vanishes in the limit via \eqref{cv-rate:h-nls-3-body-0}. However, inequality \eqref{energy:h-split} directly contradicts the strict binding inequality
$$
E_{1,b_{0}}^{\rm NLS}[1] < E_{1,b_{0}}^{\rm NLS}[\lambda] + E_{1,b_{0}}^{\rm NLS}[1-\lambda]
$$
which is established analogously to the proof of \Cref{thm:existence-h}. Hence, $\|w_{0}\|_{L^{2}}^{2}=1$. Then, by Brezis--Lieb lemma \cite{BreLie-83} and Sobolev embedding, $w_{N} \to w_{0}$ strongly in $L^{p}(\mathbb R)$ for all $p \in [2,+\infty)$. Taking $N\to+\infty$ in \eqref{boundedness:h-scaled-1}, using \eqref{cv-rate:h-nls-2-body-0}, \eqref{cv:h-nls-3-body}, and applying Fatou's lemma yields
$$
E_{1,b_{0}}^{\rm NLS}[1] \geq \lim_{N\to+\infty}a_{N}^{-2}\mathcal{E}^{\rm H}_{a_{N},b_{N},N}[v_{N}] \geq \lim_{N\to+\infty}\mathcal{E}_{1,b_{0}}^{\rm NLS}[w_{N}] \geq \mathcal{E}_{1,b_{0}}^{\rm NLS}[w_{0}] \geq E_{1,b_{0}}^{\rm NLS}[1].
$$
This proves energy convergence \eqref{cv:energy-h-cubic-quintic} and kinetic energy convergence $\|w_{N}'\|_{L^{2}}^{2} \to \|w_{0}'\|_{L^{2}}^{2}$, which implies $w_{N} \to w_{0}$ strongly in $H^{1}(\mathbb R)$, by Br\'ezis--Lieb lemma \cite{BreLie-83}. This completes the proof of \Cref{thm:behaviors-h}\ref{thm:behavior-h-cubic-quintic}.

We now address the TF approximation for the Hartree problem in the limit $b_{N} \to +\infty$. To derive the energy upper bound in \eqref{cv:energy-h-tf}, we construct a mollified trial state. Let $g_{\mu} = (2\pi)^{-1}\mu e^{-\sqrt{\mu} |x|}$ for $\mu>0$, and denot by $\varrho^{\rm TF}_{1,1}$ the (unique) ground state of $E^{\rm TF}_{1,1}[1]$. Set
$$
v_{N} = \sqrt{g_{\mu}*\varrho^{\rm TF}_{N}} \quad \text{with} \quad \varrho^{\rm TF}_{N}(x) = \frac{a_{N}}{b_{N}}\varrho^{\rm TF}_{1,1}\left(\left(\frac{a_{N}}{b_{N}}\right)x\right)
$$
Since $\int_{\mathbb R}g_{\mu} = 1$ and $|g_{\mu}'| = \sqrt{\mu} g_{\mu}$, the kinetic energy satisfies
\begin{equation}\label{cv:energy-h-tf-kinetic}
|v'_{N}| = \frac{|g_{\mu}'*\varrho^{\rm TF}_{N}|}{2\sqrt{g_{\mu}*\varrho^{\rm TF}_{N}}} \leq \frac{\sqrt{\mu}}{2}\sqrt{g_{\mu}*\varrho^{\rm TF}_{N}}.
\end{equation}
By the variational principle, \cite[Lemma 7]{LewNamRou-17}, \eqref{cv-rate:h-nls-2-body}, \eqref{cv-rate:h-nls-3-body-0}, \eqref{cv:energy-h-tf-kinetic}, and Young's inequality, we obtain
\begin{align*}
E^{\rm H}_{a_{N},b_{N},N}[1] \leq{} & \mathcal{E}^{\rm H}_{a_{N},b_{N},N}[v_{N}] \\
\leq{} & \|v'_{N}\|_{L^{2}}^{2} + a_{N}N^{-\alpha}L\|v'_{N}\|_{L^{2}}\|v_{N}\|_{L^{6}}^{3} \\
& - \left(1-2\int_{|z|\geq L}U(z){\rm d}z\right)\frac{a_{N}}{2} \int_{\mathbb R}|v_{N}(x)|^{4}{\rm d}x + \frac{b_{N}}{6} \int_{\mathbb R}|v_{N}(x)|^{6}{\rm d}x \\
\leq{} & \frac{\mu}{4} + \frac{a_{N}}{2}\sqrt{\mu}N^{-\alpha}L\|\varrho^{\rm TF}_{N}\|_{L^{3}}^{\frac{3}{2}} + (1-o(1))\frac{a_{N}}{2}\left(\int_{\mathbb R}\varrho^{\rm TF}_{N}(x)^{2} - \int_{\mathbb R}(g_{\mu}*\varrho^{\rm TF}_{N})(x)^{2}\right) \\
& - (1-o(1))\frac{a_{N}}{2} \int_{\mathbb R}\varrho^{\rm TF}_{N}(x)^{2}{\rm d}x + \frac{b_{N}}{6} \int_{\mathbb R}\varrho^{\rm TF}_{N}(x)^{3}{\rm d}x \\
={} & \frac{\mu}{4} + \frac{a_{N}^{2}}{2b_{N}}\sqrt{\mu}N^{-\alpha}L\|\varrho^{\rm TF}_{1,1}\|_{L^{3}}^{\frac{3}{2}} + (1-o(1))\frac{a_{N}^{2}}{2b_{N}}\left(\int_{\mathbb R}\varrho^{\rm TF}_{1,1}(x)^{2} - \int_{\mathbb R}\left(g_{\mu \left(\frac{b_{N}}{a_{N}}\right)^{2}}*\varrho^{\rm TF}_{1,1}\right)(x)^{2}\right) \\
& + (1+o(1))E^{\rm TF}_{a_{N},b_{N}}.
\end{align*}
Now, we optimize over $\mu>0$ the first two terms on the right hand side of the above, and we choose $L$ in such a way that the optimal $\mu$ satisfies
$$
\mu = \mu_{N} = \mathcal{O}\left(\left(\frac{a_{N}^{2}}{b_{N}}N^{-\alpha}L\right)^{2}\right) = o(1)E^{\rm TF}_{a_{N},b_{N}}[1] = o\left(\frac{a_{N}^{2}}{b_{N}}\right) \quad \text{and} \quad \mu \left(\frac{b_{N}}{a_{N}}\right)^{2} \gg 1.
$$
This is equivalent to
$$
\frac{N^{\alpha}}{a_{N}} \ll L \ll \sqrt{b_{N}}\frac{N^{\alpha}}{a_{N}}.
$$
The minimum requirement is that $\dfrac{a_{N}^{2}}{b_{N}} \ll N^{2\alpha}$. Under this assumption, we choose
$$
L = L_{N }= b_{N}^{\frac{1}{2}}\frac{N^{\alpha}}{a_{N}} \left(\min\left\{\left(b_{N}^{\frac{1}{2}}\frac{N^{\alpha}}{a_{N}}\right)^{\frac{1}{2}},b_{N}^{\frac{1}{4}}\right\}\right)^{-1} \gg 1.
$$
By standard mollifier properties (e.g., \cite[Theorem 2.16]{LieLos-01}),
$$
\lim_{N\to+\infty}g_{\mu \left(\frac{b_{N}}{a_{N}}\right)^{2}}*\varrho^{\rm TF}_{1,1} = \varrho^{\rm TF}_{1,1}
$$
strongly in $L^{2}(\mathbb R)$, as $N\to+\infty$. Putting all together, we obtain the desired energy upper bound in \eqref{cv:energy-h-tf}.

We prove the matching energy lower bound in \eqref{cv:energy-h-tf} and the density convergence \eqref{cv:gs-h-tf}. Let $\{v_{N}\}$ be a sequence of ground states of $E^{\rm H}_{a_{N},b_{N},N}[1]$, for $a_{N} > 0$, $b_{N} > 0$, and $N$ large enough, and define the rescaled density
$$
\varrho_{N}(x) = \frac{b_{N}}{a_{N}} v_{N}\left(\frac{b_{N}}{a_{N}}x\right)^{2}.
$$
Using \eqref{cv-rate:h-nls-2-body-0}, the non-negativity of the three-body potential, and energy upper bound \eqref{cv:energy-h-tf}, we obtain an a priori kinetic bound
$$
\frac{a_{N}^{2}}{b_{N}}(E^{\rm TF}_{1,1,}[1] + o(1)) \geq \int_{\mathbb R}|v'_{N}|^{2} - \frac{a_{N}}{2}\int_{\mathbb R}|v_{N}|^{4} \geq \frac{1}{2}\int_{\mathbb R}|v'_{N}|^{2} - \frac{a_{N}^{2}}{8}.
$$
Since $b_{N} \to +\infty$, the above yields a priori bound $\|v_{N}'\|_{L^{2}} \leq a_{N}$. Applying \eqref{cv-rate:h-nls-2-body-0}, \eqref{cv-rate:h-nls-3-body}, and kinetic energy non-negativity yields
\begin{align*}
E^{\rm H}_{a_{N},b_{N},N}[1] ={} & \mathcal{E}^{\rm H}_{a_{N},b_{N},N}[v_{N}] \\
\geq{} & -\frac{a_{N}}{2} \int_{\mathbb R}|v_{N}(x)|^{4}{\rm d}x + \left(1-4\int_{|z|\geq L}U^{\rm 3B}(z){\rm d}z\right)\frac{b_{N}}{6} \int_{\mathbb R}|v_{N}(x)|^{6}{\rm d}x \\
& - \frac{2}{3}b_{N}N^{-\beta}L \|v_{N}\|_{L^{2}}^{3} \|v'_{N}\|_{L^{2}}^{3} \\
={} & \frac{a_{N}^{2}}{b_{N}} \left[-\frac{1}{2}\int_{\mathbb R}\varrho_{N}(x)^{2}{\rm d}x + \frac{1-o(1)}{6}\int_{\mathbb R}\varrho_{N}(x)^{3}{\rm d}x - \frac{2}{3}a_{N}b_{N}^{2}N^{-\beta}L\right].
\end{align*}
Under the hypothesis $a_{N}b_{N}^{2} \ll N^{\beta}$, taking $L = (a_{N}b_{N}^{2}N^{-\beta})^{-\frac{1}{2}}$ controls the error term. Combined with the energy upper bound \eqref{cv:energy-h-tf-2}, we obtain
\begin{equation}\label{cv:energy-TF-lower-bound}
o(1) + E^{\rm TF}_{1,1}[1] \geq -\frac{1}{2}\int_{\mathbb R}\varrho_{N}(x)^{2}{\rm d}x + \frac{1-o(1)}{6}\int_{\mathbb R}\varrho_{N}(x)^{3}{\rm d}x.
\end{equation}
By the H\"older inequality, \eqref{cv:energy-TF-lower-bound} yields that $\{\varrho_{N}\}_{N}$ is uniformly bounded in $L^{1}\cap L^{3}(\mathbb R)$. Because the TF functional is local, compactness of, the compactness of $\{\varrho_{N}\}_{N}$ is not directly guaranteed. However, passing to the symmetric decreasing rearrangements $\varrho_{N}^{*}$preserves all $L^{p}$ norms while providing monotonicity. By Helly's selection principle, $\varrho_{N}^{*} \to \varrho_{0}$ weakly in $L^{1} \cap L^{3}(\mathbb R)$ and pointwise almost everywhere in $\mathbb R$. . Monotonicity further implies strong convergence in $L^{p}(\mathbb R)$ for all $p \in (1,3)$, by the arguments in \cite{DoaNgu-26}. We prove that this convergence also holds strongly in $L^{1}(\mathbb R)$ and $L^{3}(\mathbb R)$ as well.

We first note that the limiting profile $\varrho_{0}$ is non-trivial. Note that $\varrho_{0} \equiv 0$ would imply $E^{\rm TF}_{1,1}[1] \geq 0$ from \eqref{cv:energy-TF-lower-bound}, after replacing $\varrho_{N}$ by $\varrho_{N}^{*}$ and taking the limit $N\to\infty$. This contradicts the negativity of $E^{\rm TF}_{1,1}[1] < 0$. Using weak $L^{3}$ limit, strong $L^{2}$ convergence, and Fatou's lemma on \eqref{cv:energy-TF-lower-bound}, we compute
$$
E^{\rm TF}_{1,1}[1] \geq \mathcal{E}^{\rm TF}_{1,1}[\varrho_{0}] = \mathcal{E}^{\rm TF}_{1,1}[\widetilde{\varrho_{0}}]\int_{\mathbb R}\varrho_{0}(x){\rm d}x \geq E^{\rm TF}_{1,1}[1]\int_{\mathbb R}\varrho_{0}(x){\rm d}x \geq E^{\rm TF}_{1,1}[1],
$$
where $\widetilde{\varrho_{0}}(x) = \varrho_{0}\left(\left(\int_{\mathbb R}\varrho_{0}\right)x\right)$ is normalized in $L^{1}(\mathbb R)$. We have utilized the facts that $0 < \int_{\mathbb R}\varrho_{0} \leq 1$ and that $E^{\rm TF}_{1,1}[1] < 0$. Equality must hold throughout, proving that $\int_{\mathbb R}\varrho_{0} = 1$ and that $\int_{\mathbb R}(\varrho_{N}^{*})^{3} \to \int_{\mathbb R}\varrho_{0}^{3}$. By the Br\'ezis--Lieb lemma \cite{BreLie-83}, $\varrho_{N}^{*} \to \varrho_{0}$ strongly in $L^{1} \cap L^{3}(\mathbb R)$. This also establishes the energy lower bound in \eqref{cv:energy-h-tf} and confirms that $\varrho_{0}$ is a ground state of $E^{\rm TF}_{1,1}[1]$. Finally, uniqueness of the limit profile ensures convergence of the whole sequence, concluding the proof.
\end{proof}


\begin{thebibliography}{10}

\bibitem{AkhDasVag-99}
{\sc N.~Akhmediev, M.~P. Das, and A.~Vagov}, {\em {Bose-Einstein condensation
  of atoms with attractive interaction}}, International Journal of Modern
  Physics B, 13 (1999), pp.~625--631.

\bibitem{BisBla-15}
{\sc R.~Bisset and P.~Blakie}, {\em {Crystallization of a dilute atomic dipolar
  condensate}}, Physical Review A, 92 (2015), p.~061603.

\bibitem{Blakie-16}
{\sc P.~B. Blakie}, {\em {Properties of a dipolar condensate with three-body
  interactions}}, Physical Review A, 93 (2016), p.~033644.

\bibitem{BraSacHul-97}
{\sc C.~C. Bradley, C.~Sackett, and R.~Hulet}, {\em {Bose-Einstein condensation
  of lithium: Observation of limited condensate number}}, Physical Review
  Letters, 78 (1997), p.~985.

\bibitem{BraSacTolHul-95}
{\sc C.~C. Bradley, C.~Sackett, J.~Tollett, and R.~G. Hulet}, {\em {Evidence of
  Bose-Einstein condensation in an atomic gas with attractive interactions}},
  Physical review letters, 75 (1995), p.~1687.

\bibitem{BreLie-83}
{\sc H.~Brezis and E.~H. Lieb}, {\em {A Relation Between Pointwise Convergence
  of Functions and Convergence of Functionals}}, Proceedings of the American
  Mathematical Society, 88 (1983), pp.~486--490.

\bibitem{ChePav-11}
{\sc T.~Chen and N.~Pavlovi{\'c}}, {\em {The quintic NLS as the mean field
  limit of a Boson gas with three-body interactions}}, Journal of Functional
  Analysis, 260 (2011), pp.~959--997.

\bibitem{Chen-12}
{\sc X.~Chen}, {\em {Second order corrections to mean field evolution for
  weakly interacting bosons in the case of three-body interactions}}, Archive
  for Rational Mechanics and Analysis, 203 (2012), pp.~455--497.

\bibitem{CheHol-17}
{\sc X.~Chen and J.~Holmer}, {\em {The rigorous derivation of the 2D cubic
  focusing NLS from quantum many-body evolution}}, International Mathematics
  Research Notices, 2017 (2017), pp.~4173--4216.

\bibitem{CheHol-19}
\leavevmode\vrule height 2pt depth -1.6pt width 23pt, {\em {The derivation of
  the $\mathbb{T}^3$ energy-critical NLS from quantum many-body dynamics}},
  Inventiones Mathematicae, 217 (2019), pp.~433--547.

\bibitem{DinNguRou-24}
{\sc V.~D. Dinh, D.-T. Nguyen, and N.~Rougerie}, {\em {Blowup of
  two-dimensional attractive Bose--Einstein condensates at the critical
  rotational speed}}, Annales de l'Institut Henri Poincar{\'e} C, 41 (2024),
  pp.~1055--1081.

\bibitem{DoaNgu-26}
{\sc T.~A.~T. Doan and D.-T. Nguyen}, {\em {Local density approximation and
  other limit regimes for a homogeneous Bose gas with repulsive three-body
  interactions in low dimensional space}}, to appear on Vietnam Journal of
  Mathematics.

\bibitem{EsrGreZhoLin-96}
{\sc B.~Esry, C.~H. Greene, Y.~Zhou, and C.~Lin}, {\em Role of the scattering
  length in three-boson dynamics and bose-einstein condensation}, Journal of
  Physics B: Atomic, Molecular and Optical Physics, 29 (1996), p.~L51.

\bibitem{GamFreTom-99}
{\sc A.~Gammal, T.~Frederico, and L.~Tomio}, {\em {Trapped Bose-Einstein
  condensed gas with two and three-atom interactions}}, in Proceedings of the
  International Workshop, {C. Bertulani, LF. Canto and M. Hussein}, ed., World
  Scientific, 1999.

\bibitem{GamFreTomCho-00}
{\sc A.~Gammal, T.~Frederico, L.~Tomio, and P.~Chomaz}, {\em {Atomic
  Bose-Einstein condensation with three-body interactions and collective
  excitations}}, Journal of Physics B: Atomic, Molecular and Optical Physics,
  33 (2000), p.~4053.

\bibitem{Hof-77}
{\sc M.~{Hoffmann-Ostenhof} and T.~{Hoffmann-Ostenhof}}, {\em
  {{S}chr{\"o}dinger inequalities and asymptotic behavior of the electron
  density of atoms and molecules}}, Physical Review A, 16 (1977),
  pp.~1782--1785.

\bibitem{JosRic-97}
{\sc C.~Josserand and S.~Rica}, {\em {Coalescence and droplets in the
  subcritical nonlinear Schr{\"o}dinger equation}}, Physical Review Letters, 78
  (1997), p.~1215.

\bibitem{LevLeb-69}
{\sc J.-M. L{\'e}vy-Leblond}, {\em Nonsaturation of gravitational forces},
  Journal of Mathematical Physics, 10 (1969), pp.~806--812.

\bibitem{Lewin-ICMP}
{\sc M.~Lewin}, {\em {Mean-field limit of Bose systems: rigorous results}},
  arXiv preprint arXiv:1510.04407,  (2015).

\bibitem{LewNamRou-16}
{\sc M.~Lewin, P.~T. Nam, and N.~Rougerie}, {\em {The mean-field approximation
  and the non-linear Schr\"odinger functional for trapped {B}ose gases}},
  Transactions of the American Mathematical Society, 368 (2016),
  pp.~6131--6157.

\bibitem{LewNamRou-17-proc}
\leavevmode\vrule height 2pt depth -1.6pt width 23pt, {\em Blow-up profile of
  rotating 2d focusing {B}ose gases}, in Workshop on Macroscopic Limits of
  Quantum Systems, Springer, 2017, pp.~145--170.

\bibitem{LewNamRou-17}
\leavevmode\vrule height 2pt depth -1.6pt width 23pt, {\em A note on 2d
  focusing many-boson systems}, Proceedings of the American Mathematical
  Society, 145 (2017), pp.~2441--2454.

\bibitem{LiYao-21}
{\sc Y.~Li and F.~Yao}, {\em {Derivation of the nonlinear Schr{\"o}dinger
  equation with a general nonlinearity and Gross--Pitaevskii hierarchy in one
  and two dimensions}}, Journal of Mathematical Physics, 62 (2021), p.~021505.

\bibitem{LieLos-01}
{\sc E.~H. Lieb and M.~Loss}, {\em {Analysis}}, vol.~14 of {Graduate Studies in
  Mathematics}, American Mathematical Society, Providence, RI, 2nd~ed., 2001.

\bibitem{LieSeiYng-00}
{\sc E.~H. Lieb, R.~Seiringer, and J.~Yngvason}, {\em {Bosons in a trap: A
  rigorous derivation of the {G}ross--{P}itaevskii energy functional}},
  Physical Review A, 61 (2000), p.~043602.

\bibitem{LieYau-87}
{\sc E.~H. Lieb and H.-T. Yau}, {\em {{The {C}handrasekhar theory of Stellar
  Collapse as the Limit of Quantum Mechanics}}}, Communications in Mathematical
  Physics, 112 (1987), pp.~147--174.

\bibitem{Lions-84a}
{\sc P.-L. Lions}, {\em {The concentration-compactness principle in the
  calculus of variations. {T}he locally compact case, {P}art {I}}}, Annales de
  l'Institut Henri Poincar\'e (C) Analyse Non Lin\'eaire, 1 (1984),
  pp.~109--149.

\bibitem{NamRicTri-22b}
{\sc P.~T. Nam, J.~Ricaud, and A.~Triay}, {\em {Dilute Bose gas with three-body
  interaction: Recent results and open questions}}, Journal of Mathematical
  Physics, 63 (2022), p.~061103.

\bibitem{NamRicTri-22a}
\leavevmode\vrule height 2pt depth -1.6pt width 23pt, {\em {Ground state energy
  of the low density Bose gas with three-body interactions}}, Journal of
  Mathematical Physics, 63 (2022), p.~071903.

\bibitem{NamRicTri-23}
\leavevmode\vrule height 2pt depth -1.6pt width 23pt, {\em {The condensation of
  a trapped dilute Bose gas with three-body interactions}}, Probability and
  Mathematical Physics, 4 (2023), pp.~91--149.

\bibitem{NamSal-20}
{\sc P.~T. Nam and R.~Salzmann}, {\em {Derivation of 3D energy-critical
  nonlinear Schr{\"o}dinger equation and Bogoliubov excitations for Bose
  gases}}, Communications in Mathematical Physics, 375 (2020), pp.~495--571.

\bibitem{Nguyen-26-MZ}
{\sc D.-T. Nguyen}, {\em {Homogeneous attractive Bose--Einstein condensates
  with repulsive three-body interactions: the two-dimensional case}},
  Mathematische Zeitschrift, 313 (2026), p.~85.

\bibitem{NguRic-24}
{\sc D.-T. Nguyen and J.~Ricaud}, {\em On one-dimensional bose gases with
  two-body and (critical) attractive three-body interactions}, SIAM Journal on
  Mathematical Analysis, 56 (2024), pp.~3203--3251.

\bibitem{NguRic-25}
\leavevmode\vrule height 2pt depth -1.6pt width 23pt, {\em {Stabilization
  against collapse of 2D attractive Bose--Einstein condensates with repulsive,
  three-body interactions}}, Letters in Mathematical Physics, 115 (2025),
  pp.~1--46.

\bibitem{Onsager-39}
{\sc L.~Onsager}, {\em {Electrostatic Interaction of Molecules}}, Journal of
  Physical Chemistry, 43 (1939), pp.~189--196.

\bibitem{Petrov-14}
{\sc D.~Petrov}, {\em {Three-body interacting bosons in free space}}, Physical
  Review Letters, 112 (2014), p.~103201.

\bibitem{Rougerie-20}
{\sc N.~Rougerie}, {\em {Non linear Schr{\"o}dinger limit of bosonic ground
  states, again}}, Confluentes Mathematici, 12 (2020), pp.~69--91.

\bibitem{Rougerie-EMS}
\leavevmode\vrule height 2pt depth -1.6pt width 23pt, {\em {Scaling limits of
  bosonic ground states, from many-body to nonlinear Schr\"odinger}}, EMS
  Surveys in Mathematical Sciences, 7 (2020), pp.~253--408.

\bibitem{SacStoHul-98}
{\sc C.~Sackett, H.~Stoof, and R.~Hulet}, {\em {Growth and collapse of a
  Bose--Einstein condensate with attractive interactions}}, Physical Review
  Letters, 80 (1998), p.~2031.

\bibitem{Xie-15}
{\sc Z.~Xie}, {\em {Derivation of a nonlinear Schr{\"o}dinger equation with a
  general power-type nonlinearity in $ d= 1, 2$}}, Differential and Integral
  Equations, 28 (2015), pp.~455--504.

\bibitem{Yuan-15}
{\sc J.~Yuan}, {\em {Derivation of the Quintic NLS from many-body quantum
  dynamics in $\mathbb T^2$}}, Communications on Pure \& Applied Analysis, 14
  (2015), p.~1941.

\end{thebibliography}

\end{document}